\colorlet{optimal}{black!30}
\colorlet{best}{black!20}
\colorlet{better}{black!10}
\newcommand{\EE}{\mathcal{E}}
\newcommand{\G}{\mathcal{G}}
\newcommand{\PP}{\mathbb{P}}
\newcommand{\st}{\text{ s.t. }}
\newcommand{\defeq}{\,{\buildrel \triangle \over =}\,}
\newcommand{\Z}{\mathbb{Z}}
\newcommand{\1}{\mathbbm{1}}
\newcommand{\bo}[1]{O\left(#1\right)}
\newtheorem{prop}{Proposition}
\newtheorem{theorem}{Theorem}
\newtheorem{lemma}{Lemma}
\newtheorem{definition}{Definition}
\theoremstyle{remark}
\newtheorem*{remark}{Remark}
\title{A Message Passing Algorithm for the Problem of Path Packing in Graphs}
\author{Patrick Eschenfeldt and David Gamarnik}
\begin{document}

\maketitle
\abstract{We consider the problem of packing node-disjoint directed paths
in a directed graph. We consider a variant of this problem where each path 
starts within a fixed subset of root nodes, subject to a given bound on the
length of paths. This problem is motivated by the so-called kidney exchange
problem, but has potential other applications and is interesting in its own
right. 

We propose a new algorithm for this problem based on the message passing/belief
propagation technique. A priori this problem does not have an associated
graphical model, so in order
 to apply a belief propagation algorithm we provide a novel 
representation of the problem as a graphical model.
Standard belief propagation on this model has poor scaling behavior, so we
provide an efficient implementation that significantly decreases the
complexity. We provide numerical results comparing the performance of our
algorithm on both artificially created graphs and real world networks to
several alternative algorithms, including algorithms based on integer
programming (IP) techniques. These comparisons show that our algorithm scales
better to large instances than IP-based algorithms and often finds better
solutions
than a simple algorithm that greedily selects the longest path from each root
node. In some cases it also finds better solutions than the ones found by
IP-based algorithms even when the latter are allowed
to run significantly longer than our algorithm.}

\section{Introduction.}

In this paper we consider the problem of packing node-disjoint directed
paths into a directed graph, with each path starting within a designated subset
of ``root'' nodes. For a given maximum path length, our objective is to
include as many nodes as possible in paths.
This problem is motivated by the kidney exchange problem
(KEP) studied in \cite{Anderson20012015}, in which donors and
recipients must be matched together to maximize the number of donations.
In some situations these donations are performed in chains starting from a lone
donor, and logistical concerns encourage limiting the length of chains. 
This problem also has other potential
applications such as coordinating chained transactions,
and is an interesting and hard combinatorial optimization problem in general. 

We present a new algorithm for this
problem based on the message passing/belief propagation technique (BP for
short). This algorithm depends on a new representation of the problem
that allows us to create a graphical model which embeds the optimization
problem. After describing this representation we will show that a direct
application of standard BP results in poor scaling, so we provide an efficient
implementation that provides significant scaling advantages. Letting $n$ be the
number of nodes in the graph, $K$ be the
maximum path length, and $\Delta$ be the maximum degree, the efficient
implementation can perform a full iteration of the algorithm in
$\bo{n\Delta K}$ computations, while a direct implementation of belief
propagation requires $\bo{n^2\Delta^6K^2}$. 

We compare our algorithm to several alternatives and show that it
consistently scales well
 and often finds better solutions than other fast algorithms. 
In particular, we compare our efficiently implemented BP to a simple
greedy algorithm, called Greedy from now on for simplicity, and an existing 
Integer Programming (IP) algorithm used in \cite{Anderson20012015} which we
call KEP. The Greedy algorithm which we consider
makes greedy selections only between constructed
paths and not during the construction of individual paths, as at every root 
node it searches exhaustively for the longest path which can be added to the
solution starting from that node. We also use our new representation of the
problem to derive a secondary novel algorithm based on IP, which we call
Parent-Child-Depth (PCD).
Comparisons are made on a variety of randomly created graphs as well as on
several real world networks. Our random graphs are generated with each edge
existing
independently with probability $c / n$ for various values of $c$. We find that
for $n = 1000$ and $n = 10000$ with a path length bound of $5$ nodes, with 
10\% to 25\% of nodes designated as root nodes, and $c$ between 2 and 5, our
message passing algorithm consistently finds more nodes than Greedy while
running in comparable time. In many cases BP and Greedy run significantly
faster than the IP algorithms, which are cut off after a time limit if they
have not yet found the optimal solution, and in several of these cases BP also
finds solutions with more nodes than any other algorithm. We find that BP also
performs well relative to alternative algorithms when the path length bound is
increased to 10 or 15 nodes. In the regime where paths as long as 15 nodes are
allowed, PCD often finds the optimal solution in reasonable time while KEP
almost never does. We also provide numerical results for real world networks,
with sizes ranging from 5518 nodes to 260982 nodes. In all but the smallest
network the IP algorithms produce no useful results, and in most graphs BP
finds better solutions on average with respect to a random choice of root nodes
than Greedy, while running in comparable time.

Our approach to this problem is motivated by previous work applying message
passing techniques to the problem of prize-collecting Steiner trees (PCST)
done by Bayati, et.~al.~\cite{Bayati-algorithm} (also used by Bailly,
et.~al.~\cite{Bailly}). In particular, the PCST problem does not have a natural
representation as a graphical model so the authors in \cite{Bayati-algorithm}
design a new representation of the problem that leads to a graphical model and
then show how to implement BP on that model. This paper uses a similar approach
for the path packing problem.

In another relevant paper, Altarelli,
et.~al.~\cite{altarelli} apply message passing techniques to edge-disjoint
paths. They directly derive a message passing algorithm for general graphs from
an iterative cost calculation scheme valid for trees and apply it to locally
tree-like graphs and find that their algorithm performs better in terms of
paths found than various alternatives. Unlike this work, in our model paths may
end at any non-root node (as we do not consider a lone root node to constitute
a path in itself, every path must include at least one edge and thus one
non-root node).

Another example of applying a message passing approach while making major
modifications to standard methods is \cite{eschenfeldt}. The
authors introduce a new type of graphical model, called a memory factor
network, along with a message passing style algorithm to perform inference on
such a model. Designing the message passing algorithm to match the structure of
the problem provides efficiency advantages over variants of belief propagation
and also provides convergence guarantees.

The remainder of this paper is organized as follows: In section \ref{sec:setup}
we describe the problem formulation, in section \ref{sec:algorithm} we detail
the core of our message passing algorithm, in section \ref{sec:efficient} we
detail our method for efficiently implementing this algorithm, in section
\ref{sec:reconstruction} we discuss how to translate the result of the message
passing into a solution to the original problem, in section
\ref{sec:algorithms} we discuss alternative algorithms, and in section
\ref{sec:performance} we provide numerical results comparing our algorithm to
these alternatives.

\section{Problem setup.}\label{sec:setup}
The problem description consists of a directed graph $\G$ with node set $[n]
= \{1,2,\ldots,n\}$, an edge set $\EE$, and a positive integer $K$.  Let
$\EE'$ be the undirected version of the edge set of $\G$. We assume that $\G$
contains no isolated nodes, so for every $i \in [n]$ there exists some $j \in
[n]$ such that $(i,j) \in \EE'$. The nodes $[n]$ are partitioned into subsets
$U$ and $V$, and the edges are elements of $U \times V$ and $V \times V$ only.
In particular, no edges start in $V$ and end in $U$. We define a \emph{directed
simple path} in a directed graph as a sequence of distinct nodes
$\{i_1,i_2,\ldots,i_k\}$ such that $(i_l, i_{l+1}) \in \EE$ for all $l =
1,\ldots,k-1$. All paths considered in this paper will be directed simple
paths. For a path $P = \{i_1,\ldots,i_k\}$ we define the \emph{length} of $P$, 
denoted $|P|$, to be the number of nodes in the path, namely $k$. Our goal is
to find a collection $\mathcal{P} = \{P_1,\ldots,P_k\}$ of node-disjoint paths
with each path $P_j$ starting with a node in $U$ and each having length at most
$K$. Specifically, we define
\[v(\mathcal{P}) = \sum_{P \in \mathcal{P}} |P|\]
to be the count of the nodes in a collection of paths. Let $\Pi$ be the set
of feasible collections $\mathcal{P}$. Then our problem is
\begin{equation}\label{eq:opt-prob-orig}
	\max_{\mathcal{P} \in \Pi} v(\mathcal{P}).
\end{equation}
We now introduce a formalization for the problem \eqref{eq:opt-prob-orig} that
is conducive to message passing.

We will use special symbols $*$ and
$\bullet$ to represent certain relationships to be described below. We will
also introduce the notation $\partial i$ to represent the set of nodes
adjacent to the node $i$ in the undirected version of the graph $\G$.
That is, $\partial i$ is the set of nodes $j \in [n]$ such that $(i,j)
\in \EE$ or $(j,i) \in \EE$. We also let $\partial^X i =
\partial^i \cap X$ and let $\partial i \setminus j$ stand for the more formally
correct $\partial i \setminus \{j\}$.

For each node $i \in [n]$, we introduce variables $(d_i,p_i,c_i)$ where $d_i
\in [K] \cup \{*\}$, $p_i \in \partial i \cup \{*,\bullet\}$, $c_i \in
\partial^V i \cup \{*,\bullet\}$. These variables intend to represent the
depth of the node ($d_i$), its parent ($p_i$), and its child ($c_i$) in a
given solution. The special value $*$ indicates that the node does not
participate in a path, in which case we will have $d_i = p_i = c_i = *$. The
special value $\bullet$ indicates that the node is at one end of a path. Thus
if $p_i = \bullet$ this means node $i$ does not have a parent and thus starts a
path, whereas if $c_i = \bullet$ node $i$ does not have a child and thus ends a
path. Note that if $c_i$ is a node it must be a node in $V$ because nodes in
$U$ must either act as the root of a path or not participate in a path.

To formulate the optimization problem, we need to define a feasible set and an
objective function. We begin with the feasible set. We first define a function
that acts as an indicator for the consistency of the variables $(d_i,p_i,c_i)
$ at a single node $i$. For $i \in [n]$, $d_i \in [K] \cup \{*\}$,
$p_i \in \partial i \cup \{*,\bullet\}$, $c_i \in \partial^V i \cup
\{*,\bullet\}$, let
\begin{align}
	f_i(d_i,p_i,c_i) &\defeq \1\{d_i = c_i = p_i = *\} 
\label{eq:node-valid-def-np}\\
	&\quad + \1\{p_i = \bullet, c_i \not\in \{*,\bullet\}, d_i = 1, i \in U\}
\label{eq:node-valid-def-root}\\
	&\quad + \1\{p_i \not\in \{*,\bullet\}, c_i \not\in \{*,p_i\}, d_i \not\in
\{*,1\}, i \in V\}.\label{eq:node-valid-def-non-root}
\end{align}
Term \eqref{eq:node-valid-def-np} is the case in which node $i$ does not
participate, term \eqref{eq:node-valid-def-root} is the case in which node $i$
is the root of a path, and term \eqref{eq:node-valid-def-non-root} is all cases
where node $i$ participates in a path but is not the root. Note that these
three cases are mutually exclusive, so $f_i$ is an indicator function. It is
equal to zero if the variables at node $i$ are not consistent with any
global configuration $(d,p,c) = (d_j,p_j,c_j)_{j \in [n]}$ representing a set
of valid paths for the graph $\G$, and equal to one otherwise.

For each pair of nodes $i$ and $j$ that are adjacent in the undirected version
of $\G$, we define a function $g_{ij}(d_i,p_i,c_i,d_j,p_j,c_j)$ of the
variables $(d_i,p_i,c_i,d_j,p_j,c_j)$ for the nodes $i$ and $j$ which is an
indicator that these variables are consistent with each other and the existence
or nonexistence of the edges $(i,j)$ and $(j,i)$. Namely, they correspond to a
locally valid configuration, so if, e.g., node $i$ reports that its parent is
node $j$, then the edge $(j,i)$ exists, node $j$ reports that its child is node
$i$, and we have $d_i = d_j + 1$. Formally, for $i \in [n]$, $j
\in \partial i$, $d_i \in [K] \cup \{*\}$, $p_i \in \partial i \cup
\{*,\bullet\}$, $c_i \in \partial^V i \cup \{*,\bullet\}$, $d_j
\in [K] \cup \{*\}$, $p_j \in \partial j \cup \{*,\bullet\}$, $c_j \in
\partial^V j \cup \{*,\bullet\}$, let 
\begin{align}
	g_{ij}(d_i,p_i,c_i,d_j,p_j,c_j) &\defeq \notag\\
	&\hspace{-0.21\textwidth}\1\{p_i = j, c_j = i, p_j \ne i, c_i \ne j, (j,i) \in
\EE, d_i = d_j + 1\} \label{eq:edge-valid-def-j-parent}\\
	&\hspace{-0.25\textwidth}+ \1\{p_j = i, c_i = j, p_i \ne j, c_j \ne i, (i,j)
\in \EE, d_j = d_i + 1\} \label{eq:edge-valid-def-i-parent} \\
	&\hspace{-0.25\textwidth}+ \1\{p_i \ne j, c_j \ne i, p_j \ne i, c_i \ne j\}.
\label{eq:edge-valid-def-neither}
\end{align}
Lines \eqref{eq:edge-valid-def-j-parent} and \eqref{eq:edge-valid-def-i-parent}
capture the cases where $j$ is the parent of $i$ and $i$ is the parent of $j$,
respectively, and enforce the relationships that must occur in those cases.
Line \eqref{eq:edge-valid-def-neither} captures all cases where there is no
direct connection between nodes $i$ and $j$. All three cases are mutually
disjoint, so $g_{ij}$ is an indicator function. In summary, $g_{ij}$ is equal
to one if parent/child relationships agree, the edges $(i,j)$ and $(j,i)$ are
used only in the appropriate directions, and depth relationships are
consistent.

Finally we define a indicator function for variable consistency at both the
node and edge level, which we define for each $i \in [n]$, $j \in \partial i$
as
\begin{equation}\label{eq:node-edge-valid}
h_{ij}(d_i,p_i,c_i,d_j,p_j,c_j) \defeq
g_{ij}(d_i,p_i,c_i,d_j,p_j,c_j)f_i(d_i,p_i,c_i)f_j(d_j,p_j,c_j).
\end{equation} 
With this definition of $h_{ij}$, we can define the set
\begin{equation}\label{eq:feasible-set-def}
M \defeq \big\{(d,p,c): h_{ij}(d_i,p_i,c_i,d_j,p_j,c_j) = 1\ \forall\ i\in
[n], j \in \partial i\big\}
\end{equation}
which will serve as the feasible set for our optimization. 

Note that each collection $\mathcal{P} \in \Pi$ of node disjoint paths is
associated with a unique assignment of variables $\{(d_i,p_i,c_i)\}_{i
\in [n]}$, which we will denote by $(d(\mathcal{P}), p(\mathcal{P}),
c(\mathcal{P}))$. In fact the converse is also true:

\begin{prop}
For every $(d,p,c) \in M$ there exists a unique $\mathcal{P} \in \Pi$ such that
$(d,p,c) = (d(\mathcal{P}), p(\mathcal{P}), c(\mathcal{P}))$.
\end{prop}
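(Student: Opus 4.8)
The plan is to construct, from a given $(d,p,c) \in M$, an explicit collection of paths $\mathcal{P}$, then verify it lies in $\Pi$, and finally check that it is the unique such collection mapping to $(d,p,c)$. I would start by identifying the root nodes: let $R = \{i \in [n] : p_i = \bullet\}$. By term~\eqref{eq:node-valid-def-root} of $f_i$, every such $i$ is in $U$, has $d_i = 1$, and has $c_i \in \partial^V i$ (so $c_i$ is an actual node, not $*$ or $\bullet$). For each $i \in R$, I would then trace forward through the child pointers: set $i_1 = i$ and $i_{l+1} = c_{i_l}$, stopping at the first index $k$ with $c_{i_k} = \bullet$. The function $g_{ij}$ (line~\eqref{eq:edge-valid-def-i-parent}, read with the roles so that $i_l$ is the parent of $i_{l+1}$) forces $(i_l, i_{l+1}) \in \EE$, forces $p_{i_{l+1}} = i_l$, and forces $d_{i_{l+1}} = d_{i_l} + 1$; meanwhile $f_{i_{l+1}}$ via term~\eqref{eq:node-valid-def-non-root} guarantees $i_{l+1} \in V$ and $d_{i_{l+1}} \ne *$. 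So each traced sequence is a directed path starting in $U$ with strictly increasing depths $1, 2, 3, \ldots$.

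The key steps after the construction are: (i) \emph{termination and bounded length} — the depths along a traced sequence are $1, 2, \ldots$ and $d_i \in [K]$ always, so the sequence has at most $K$ nodes and must hit a node with $c_{i_k} = \bullet$; this also shows no node repeats within a single traced path, since depths are distinct. (ii) \emph{node-disjointness across paths} — here I would argue that every participating node (i.e. every $i$ with $d_i \ne *$) lies on exactly one traced path. A node $i$ with $d_i \ne *$ either has $p_i = \bullet$ (it is a root, on its own path) or $p_i \in \partial i$ is a genuine node, in which case $g_{p_i, i}$ forces $c_{p_i} = i$ and $d_{p_i} = d_i - 1$; following parent pointers backward strictly decreases the depth, so after finitely many steps we reach a node with $p = \bullet$, i.e. a unique root, and the forward trace from that root passes through $i$. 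Uniqueness of the root reached, plus the fact that the child pointer is single-valued, gives that the traced paths are pairwise node-disjoint and their union is exactly the set of participating nodes. (iii) Conversely, non-participating nodes have $d_i = p_i = c_i = *$ by term~\eqref{eq:node-valid-def-np} and never appear on any traced path. Together these show $\mathcal{P} \in \Pi$.

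For the final uniqueness claim: I would show $(d(\mathcal{P}), p(\mathcal{P}), c(\mathcal{P})) = (d,p,c)$, which is immediate from the construction (the parent/child/depth of each node in $\mathcal{P}$ is exactly what the pointers say), and then observe that if $\mathcal{P}' \in \Pi$ also satisfies $(d(\mathcal{P}'), p(\mathcal{P}'), c(\mathcal{P}')) = (d,p,c)$, then $\mathcal{P}'$ has the same set of participating nodes and the same parent/child relations, hence the same paths, so $\mathcal{P}' = \mathcal{P}$. (The map $\mathcal{P} \mapsto (d(\mathcal{P}), p(\mathcal{P}), c(\mathcal{P}))$ being well-defined into $M$ is the already-noted forward direction, which is a routine case check against the definitions of $f_i$ and $g_{ij}$.)

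I expect the main obstacle to be step~(ii): carefully ruling out "pathological" pointer configurations that the local constraints $h_{ij}$ do not obviously forbid — for instance, two distinct roots whose forward traces merge, or a node whose backward parent-chain fails to terminate at a root. The depth variable is the crucial tool here: because $g_{ij}$ ties $d_i$ and $d_j$ by a difference of exactly one along every used edge, the depths give a consistent global "potential" that makes all parent-chains strictly monotone and hence terminating, and makes merging impossible (a merge node would need two distinct in-neighbors both claiming it as child, contradicting $c_{p_i} = i$ being a single value combined with the $p_j \ne i$ / $c_i \ne j$ clauses in $g_{ij}$). Making this argument airtight — essentially showing that the purely local consistency encoded in $M$ already implies the global tree/path structure — is the heart of the proposition.
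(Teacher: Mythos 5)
Your proposal is correct and follows essentially the same route as the paper: trace each path forward along child pointers from the root nodes (equivalently, the depth-one nodes), use the depth variable together with \eqref{eq:edge-valid-def-i-parent} and \eqref{eq:node-valid-def-non-root} to force edge existence, depth increments, termination within $K$ nodes, and then verify the resulting collection reproduces $(d,p,c)$, with non-participating nodes handled by the backward depth-decreasing argument. The only difference is one of emphasis: you make the disjointness/coverage step (no merging, every participating node reached from a unique root) fully explicit, whereas the paper argues the complementary fact that any node missed by the construction must have all-$*$ variables and treats disjointness more tersely.
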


\begin{proof}
Given a set of variables $(d,p,c) \in M$, recall that for every $i \in [n]$ we
have $f_i(d_i,p_i,c_i) = 1$ and for every $j \in \partial i$ we have
$g_{ij}(d_i,p_i,c_i,d_j,p_j,c_j) = 1$.

We will construct a feasible set of node-disjoint paths by the following
procedure. Find the set of nodes $i$ such that $d_i = 1$, which we denote
$\mathcal{R}$.

If $\mathcal{R}$ is empty then for every $i \in U$ we must be in case
\eqref{eq:node-valid-def-np} because case \eqref{eq:node-valid-def-root} is
excluded. Thus $d_i = p_i = c_i = *$. For $i \in V$, by
\eqref{eq:node-valid-def-np} and \eqref{eq:node-valid-def-non-root} we either
have $d_i = *$ or $d_i > 1$. But by \eqref{eq:edge-valid-def-j-parent} $d_i =
2$ only if $d_j = 1$ for some $j \in \partial i$. This implies that $d_i \ne 2$
for any $i \in V$. Repeating this argument implies $d_i \ne l$ for any $l \in
[K]$, so we conclude $d_i = *$. Then \eqref{eq:node-valid-def-np} implies
$p_i = c_i = *$. Thus for every node $i \in [n]$ we have  $d_i = p_i = c_i =
*$, so we let $\mathcal{P} = \emptyset$ because $d_i(\emptyset) =
p_i(\emptyset) = c_i(\emptyset) = *$ for all $i \in [n]$.
 
Otherwise, if $\mathcal{R}$ is nonempty, we will construct one path for each $i
\in \mathcal{R}$. Consider some $i_1 \in [n]$ such that $d_{i_1} = 1$.
By \eqref{eq:node-valid-def-root} we must have $i_1 \in U$, so it will be valid
to start a path at $i_1$. Again by
\eqref{eq:node-valid-def-root} we know $c_{i_1} \ne *,\bullet$, so $c_{i_1} =
i_2$ for some $i_2 \in \partial^V i$. By \eqref{eq:edge-valid-def-i-parent}
with $i = i_i$ and $j = i_2$ we have $d_{i_2} = d_{i_1} + 1 = 2$, $(i_1,i_2)
\in \EE$, $p_{i_2} = i_1$ and $c_{i_2} \ne i_1$. Thus it is valid to
start our path with the node sequence $\{i_1,i_2\}$. If $c_{i_2} = \bullet$
then we end this path and add $P = \{i_1,i_2\}$  to $\mathcal{P}$. If $c_{i_2}
\ne \bullet$ then by $d_{i_2} = 2$ and \eqref{eq:node-valid-def-non-root} we
have $c_{i_2} = i_3$ for some $i_3 \in \partial^V i_2$. By
\eqref{eq:edge-valid-def-i-parent} with $i = i_2$ and $j = i_3$ we have
$d_{i_3} = 3$, $(i_2,i_3) \in \EE$, $p_{i_3} = i_2$, and $c_{i_3} \ne i_2$.
Thus it is valid to start our path with the node sequence $\{i_1,i_2,i_3\}$. If
$c_{i_3} = \bullet$ we end this path and add it to $\mathcal{P}$. Otherwise we
iterate, because \eqref{eq:node-valid-def-non-root} implies $c_{i_3} = i_4$ for
some $i_4 \in \partial^V i_3$. In the generic step, we are considering
$c_{i_{l-1}} = i_l$ for some $i_l \in \partial^V i_{l-1}$ and we have
$d_{i_{l-1}} = l-1$. Then \eqref{eq:edge-valid-def-i-parent} for $i = i_{l-1}$
and $j = i_l$ implies $d_{i_l} = l$, $(i_{l-1},i_l) \in \EE$, $p_{i_l} =
i_{l-1}$ and $c_{i_l} \ne i_{l-1}$. If $c_{i_l} = \bullet$ we terminate the
path and add $P = \{i_1,i_2,\ldots,i_l\}$ to $\mathcal{P}$. Otherwise there
exists some $i_{l+1} \in \partial^V i_l$ such that $c_{i_l} = i_{l+1}$ so we
iterate. This process is guaranteed to terminate because $d_{i_l} = l$ and
$d_{i_l} \le K$. Once we have added the path starting at node $i_1$ we repeat
the process for each node $i$ with $d_i = 1$.

To see that $(d(\mathcal{P}), p(\mathcal{P}), c(\mathcal{P})) = (d,p,c)$ after
this process, first consider a node $j$ which does not participate in any path
in $\mathcal{P}$. We have $d_j(\mathcal{P}) = p_j(\mathcal{P}) = 
c_j(\mathcal{P}) = *$ so we want to show $d_j = p_j = c_j = *$. Because every
node $i$ with $d_i = 1$ starts a path in the above procedure, we must have $d_j
\ne 1$. Suppose $d_j = l$ for some $l \in \{2,\ldots,K\}$. Then
\eqref{eq:node-valid-def-non-root} implies $p_j \ne \{*,\bullet\}$ so there
exists some $i \in \partial j$ such that $p_j = i$ and
\eqref{eq:edge-valid-def-i-parent} implies $d_i = l - 1$. If $i$ participates
in a path in $\mathcal{P}$ then our above procedure would also include $j$. If
$i$ does not participate then we have found a non-participating node with depth
$l - 1$. Thus if there are were any non-participating $j$ with $d_j \in [K]$ 
there would be a non-participating $j$ with $d_j = 1$, but we have already
shown there is no such $j$. We conclude that $d_j = *$ for all
non-participating $d_j$. Then \eqref{eq:node-valid-def-np} implies $p_j = c_j =
*$, as desired.

Next consider a node $i_l$ which participates in a path in $\mathcal{P}$ at
depth $l$. 
If $l = 1$ then $i_l$ starts a path and has no parent and (by the
above procedure) has child $c_{i_l}$. Thus we have $d_{i_l}(\mathcal{P}) = 1$,
$p_{i_l}(\mathcal{P}) = \bullet$, and $c_{i_l}(\mathcal{P}) = c_{i_l}$. The
only nodes $i_l$ that start paths are those with $d_{i_l} = 1$ and
\eqref{eq:node-valid-def-root} implies $p_{i_l} = \bullet$, so we have
$(d_{i_l}(\mathcal{P}),p_{i_l}(\mathcal{P}), c_{i_l}(\mathcal{P})) =
(d_{i_l},p_{i_l},c_{i_l})$, as desired.
If $l > 1$ then $i_l$ has $d_l = l$,
parent $i_{l-1} = p_{i_l}$ (as noted
during the procedure above) and either ends the path or has a child $i_{l+1} =
c_{i_l}$. By the procedure the path ends only if $c_{i_l} = \bullet$, so we
have $c_{i_l}(\mathcal{P}) = \bullet = c_{i_l}$. Thus we have
$(d_{i_l}(\mathcal{P}),p_{i_l}(\mathcal{P}), c_{i_l}(\mathcal{P})) =
(d_{i_l},p_{i_l},c_{i_l})$, as desired. If the path continues it continues to
$i_{l+1} = c_{i_l}$ so $c_{i_l}(\mathcal{P}) = c_{i_l}$, and again we have 
$(d_{i_l}(\mathcal{P}),p_{i_l}(\mathcal{P}), c_{i_l}(\mathcal{P})) =
(d_{i_l},p_{i_l},c_{i_l})$, as desired.
\end{proof}

Because we have a bijection between feasible collections of node-disjoint paths
in $\G$ and the set $M$, we can treat $M$ as the feasible region for our
optimization problem.

To define our objective, we define the function
\begin{equation}\label{eq:node-cost-def}
\eta(d_i,p_i,c_i) \defeq \begin{cases} 0 & d_i = p_i = c_i = * \\ 1 &
\text{otw} \end{cases}
\end{equation}
which assigns value 0 to the case where the node $i$ does not participate in a
path and value 1 to all other cases. We further define the function
\begin{equation}\label{eq:graph-cost-def}
H(d,p,c) \defeq \sum_{i \in [n]}\eta(d_i,p_i,c_i)	
\end{equation}
that sums the value of each node across the entire graph. Note that when
applied to a valid $(d,p,c)$, i.e.~an element of the feasible set $M$, $H$
simply counts the number of nodes of the graph which participate in some
path.

Thus we can reformulate the optimization problem \eqref{eq:opt-prob-orig} as
\begin{equation}\label{eq:opt-prob}
	\max_{(d,p,c) \in M} H(d,p,c).
\end{equation}

\section{Algorithm.}\label{sec:algorithm}

Belief propagation algorithms are used to estimate either the marginal
distributions of variables or the solution to the \emph{maximum a
posteriori} (MAP) problem in graphical models. We will use the framework
corresponding to the latter option. The MAP solution for a probability
distribution is the most likely joint assignment of all the variables in the
distribution. To utilize this approach, we transform the optimization problem
\eqref{eq:opt-prob} into a problem of finding a MAP assignment for a
probability distribution. To achieve this transformation we introduce a
corresponding undirected graphical model. A maximum likelihood assignment of
all variables in this distribution will be an optimal solution to
\eqref{eq:opt-prob}.

To create our undirected graphical model, we introduce node potentials
\begin{equation}
\phi_i(d_i,p_i,c_i) = \begin{cases} e^{-\beta} & d_i = p_i = c_i = * \\ 1 &
\text{otw} \end{cases}\label{eq:node-potential}
\end{equation}
where $\beta > 0$ is a system parameter to be chosen. We use $h_{ij}$
as defined in \eqref{eq:node-edge-valid} as our edge potentials. Together these
define the probability distribution $\PP$
\[\PP(d,p,c) \propto \prod_{i \in [n]}\phi_i(d_i,p_i,c_i)\prod_{(i,j) \in
\EE'}h_{ij}(d_i,p_i,c_i,d_j,p_j,c_j).\]
This probability distribution assigns zero probability to any global
configuration $(d,p,c)$ which does not correspond to an element of $M$, and
among positive probability configurations it assigns higher probability to
those which include more nodes.

We can now solve our original optimization problem by solving the MAP problem
on this undirected graphical model. Note that we may have multiple solutions.

The core component of the algorithm is the set of messages, which we define for
each ordered pair $(j,i)$ with $j \in [n]$ and $i \in \partial j$, representing
the message from node $j$ to node $i$. Thus each undirected edge $(i,j)
\in \EE'$ is associated with two sets of messages, representing the information
sent from $i$ to $j$ and from $j$ to $i$, respectively. The message from $j$ to
$i$ is denoted $b_{j \to i}(d_i,p_i,c_i)$ for all values of $d_i$, $p_i$,
and $c_i$ in the previously specified ranges  $d_i \in [K] \cup \{*\}$, $p_i
\in \partial i \cup \{*,\bullet\}$, $c_i \in \partial^V i \cup \{*,\bullet\}$.
Our approach to this problem will be to implement parallel belief propagation
in the standard Min-Sum form (which can be derived from the Max-Product form by
taking the negative log of all messages;  see, e.g., \cite{Koller-Friedman} or
\cite{Yedidia:2003:UBP:779343.779352}), with messages sent according to the
update rule
\begin{align}\label{eq:min-sum-update}
b_{j \to i}(d_i,p_i,c_i) &= \min_{(d_j,p_j,c_j)}\bigg(-\log\phi_j(d_j,p_j,c_j)
\notag \\
&\hspace{0.15\textwidth} - \log h_{ij}(d_i,p_i,c_i,d_j,p_j,c_j) \\
&\hspace{0.15\textwidth} + \sum_{k \in \partial j \setminus i} b_{k \to
j}(d_j,p_j,c_j)\bigg). \notag
\end{align}
We use the convention $-\log(0) = \infty$. In other words, if $h_{ij} = 0$ for
some configuration then that configuration will not be included in the
minimization. Otherwise the $\log(h_{ij})$ term contributes $\log(1) = 0$.

Because our graphical model potentially has cycles, we must compute these
messages iteratively from some starting values, which we set to be $b_{j \to
i}(d_i,p_i,c_i) = 1$ for all $j \in [n]$, $i \in \partial j$ and all values of 
$d_i,p_i,c_i$, with the goal of finding a fixed point solution. From the
messages we can compute the max-marginals at each node as
\begin{equation}\label{eq:max-marginals}	
\bar{p}_i(d_i,p_i,c_i) = \exp\left(\log \phi_i(d_i,p_i,c_i) - \sum_{k \in
\partial i} b_{k \to i}(d_i,p_i,c_i).\right)
\end{equation}
These max-marginals are, according to the Belief Propagation approach, intended
to approximate the probability of a maximum-probability configuration of the
whole system given the arguments $d_i,p_i,c_i$ are fixed. As such, they can be
used in reconstructing the maximum global configuration. The non-uniqueness of
solutions to \eqref{eq:opt-prob} complicates this, however, which we discuss
below in section \ref{sec:reconstruction}.

The update rule \eqref{eq:min-sum-update} and max-marginal equation
\eqref{eq:max-marginals} characterize the entirety of our algorithm. For some a
priori maximum number of iterations $T$ and initial messages $b_{j \to
i}^0(d_i,p_i,c_i) = 1$ for all $j \in [n]$, $i \in \partial j$ and all values
of $d_i,p_i,c_i$ we compute $b^t$ from $b^{t-1}$ according to
\eqref{eq:min-sum-update} for $t \in [T]$. If the messages reach a fixed point,
i.e.~$b^t = b^{t+1}$ for some $t < T$, we halt iterations at that point.
Furthermore, for any $0 < t \le T$ we can use $b^t$ and
\eqref{eq:max-marginals} to compute estimates of the max-marginals for any node
in any configuration at that step of the algorithm.

In this form, the number of messages is $\bo{K \sum_{i \in
[n]}\Delta_i^3}$ where $\Delta_i = |\partial i|$ is the degree of node $i$ in
the undirected graph. Indeed, for each node $i \in [n]$ there are $\Delta_i$
incoming messages, and for each message there are $K$ choices of $d_i$ and
$\bo{\Delta_i^2}$ choices of $(p_i,c_i)$ that define messages.  For $\Delta =
\max_{i \in [n]} \Delta_i$ the number of messages is
\begin{equation}\label{eq:orig-messages}
	\bo{n\Delta^3K}.
\end{equation}
At each iteration of the algorithm a message from $j$ to $i$ must be updated by
computing a minimization over all $\bo{K\Delta_j^2}$ choices of the variables
$(d_j,p_j,c_j)$. The argument of the minimization includes a sum over the
neighbors of $j$, which can be performed once for each $j \in [n]$ for a total
computational cost $\bo{n}$ per iteration. Thus a full iteration of all
messages requires $\bo{n K^2 \sum_{i \in [n]}\left( \Delta_i^3\sum_{j \in
\partial i}\Delta_j^2\right)}$ computations. For $\Delta = \max_{i
\in [n]}\Delta_i$, the number of computations is
\begin{equation}\label{eq:orig-computations}
	\bo{n^2\Delta^6K^2}.
\end{equation}
Rather than implementing this scheme directly we will instead introduce a more
efficient representation of the messages, which is described in the next
section.

\section{Efficient representation.}
\label{sec:efficient}

In this section we introduce a more efficient representation of the message
passing algorithm described above. This representation will require only
$\bo{n K \Delta}$ messages compared to $\bo{n K \Delta^3}$ for the original
description \eqref{eq:min-sum-update}, and will require $\bo{n K \Delta}$
computations to be performed per iteration compared to $\bo{n^2K^2\Delta^6}$
computations for the original representation.

Before defining the messages for the new representation, we now introduce an
intermediate set of messages and rewrite  \eqref{eq:min-sum-update} as two
steps, with
\begin{align}
b_{j \to i}(d_i,p_i,c_i) &= \min_{\substack{d_j,p_j,c_j \st \\
h_{ij}(d_i,p_i,c_i,d_j,p_j,c_j) = 1}}\psi_{j \to i}(d_j,p_j,c_j),
\label{eq:update-b} \\
\psi_{j \to i}(d_j,p_j,c_j) &= -\log\phi_j(d_j,p_j,c_j) + \sum_{k \in \partial
j \setminus i} b_{k \to j}(d_j,p_j,c_j). \label{eq:update-psi}
\end{align}
If the set over which we are minimizing in \eqref{eq:update-b} is empty we
define the minimum to be $\infty$. The messages $b_{j \to i}(d_i,p_i,c_i)$
evolve exactly as they do in
$\eqref{eq:min-sum-update}$.

We will define a collection of messages in terms of $\psi$ and $b$, intended as
an efficient representation of the message passing algorithm, which we
call the A-H form. After defining the messages we will provide the rules by
which these messages are updated. We will also obtain algorithmic complexity
bounds for our representation, which are summarized in Theorem
\ref{thm:complexity}. Then we will demonstrate in Theorem \ref{thm:validity}
the equivalence of the A-H form algorithm to the original description
\eqref{eq:min-sum-update} by showing that iterating the A-H form allows us to
compute the max-marginals \eqref{eq:max-marginals} as if we had done the
iterations in the original form \eqref{eq:min-sum-update}. This theorem will be
proved in two parts, with Lemma \ref{thm:identities} establishing a way to
write all original messages $b_{j \to i}(d_i,p_i,c_i)$ in terms of A-H form
messages and Lemma \ref{thm:updates} establishing how to update the A-H form
messages.

\subsection{Message definitions.}

The messages to be defined fall into three categories based on the type of the
node which sends the message, and the type of the node which receives it. We
first define messages sent from a non-root node $j \in V$ to another non-root
node $i \in V$. Next we will describe messages sent from a non-root node $j \in
V$ to a root node $u \in U$, and finally we describe messages sent from a root
node $u \in U$ to a non-root node $i \in V$.

For $(i,j) \in \EE'$, we define
\[\delta_{i \to j} \defeq \begin{cases} \infty & (i,j) \not\in \EE \\
 0 & (i,j) \in \EE.	
 \end{cases}
\]

Fix $(i,j) \in \EE'$ such that $i,j \in V$.
\begin{itemize}
\item For $3 \le d \le K$, let
\begin{align}
A^{d}_{j \to i} &\defeq \delta_{i \to j} + \min_{c_j \ne *,i} \psi_{j \to i}(d,
i, c_j) \label{eq:var-def-A}\\
&= \delta_{i \to j} + \min_{c_j \ne *,i} \sum_{k \in \partial j \setminus
i}b_{k \to j}(d, i, c_j). \label{eq:var-def-A-b-form}
\end{align}
Note that $-\log \phi_j(d,i,c_j) = 0$ for $c_j \ne *$, so $\psi_{j \to
i}(d,i,c_j) = \sum_{k \in \partial j \setminus i}b_{k \to j}(d,i,c_j)$.
\item For $2 \le d \le K -1$, let
\begin{align*}
B^{d}_{j \to i} &\defeq \delta_{j \to i} + \min_{p_j \ne *,i,\bullet} \psi_{j
\to i}(d, p_j, i)\\
&= \delta_{j \to i} + \min_{p_j \ne *,i,\bullet} \sum_{k \in \partial j
\setminus i}b_{k \to j}(d, p_j, i). 
\end{align*}
\item For $2 \le d \le K$, let
\begin{align}
F^{d}_{j \to i} &\defeq \min_{\substack{p_j \ne *,i,\bullet \\ c_j \ne *,i \\
p_j \ne c_j}}\psi_{j \to i}(d,p_j,c_j) \label{eq:var-def-F}\\
&= \min_{\substack{p_j \ne *,i,\bullet \\ c_j \ne *,i \\p_j \ne c_j}}\sum_{k
\in \partial j \setminus i}b_{k \to j}(d,p_j,c_j). \notag
\end{align}
\item Let
\begin{align}
G_{j \to i} &\defeq \psi_{j \to i}(*,*,*) \label{eq:var-def-G} \\
&= \beta + \sum_{k \in \partial j \setminus i}b_{k \to j}(*,*,*). \notag
\end{align}
\item Let
\begin{equation}\label{eq:var-def-H}
H_{j \to i} \defeq \min\left\{G_{j \to i}, \min_{2 \le d \le K}F^{d}_{j \to
i}\right\}.
\end{equation}
\end{itemize}

Fix $(j,u) \in \EE'$ such that $j \in V$ and $u \in U$.
\begin{itemize}
\item Let
\begin{align*}
A_{j \to u} &\defeq \min_{c_j \ne u, *}\psi_{j \to u}(2,u,c_j) \\
&= \min_{c_j \ne u,*} \sum_{k \in \partial j \setminus u} b_{k \to
j}(2,u,c_j).
\end{align*}
\item For $2 \le d \le K$, let
\begin{align*}
F^{d}_{j \to u} &\defeq \min_{\substack{p_j \ne *,u,\bullet \\ c_j \ne *,u \\
p_j \ne c_j}}\psi_{j \to u}(d,p_j,c_j)\\
&= \min_{\substack{p_j \ne *,u,\bullet \\ c_j \ne *,u \\p_j \ne c_j}}\sum_{k
\in \partial j \setminus u}b_{k \to j}(d,p_j,c_j).
\end{align*}
\item Let
\begin{align*}
G_{j \to u} &\defeq \psi_{j \to u}(*,*,*) \\
&= \beta + \sum_{k \in \partial j \setminus u}b_{k \to j}(*,*,*).
\end{align*}
\item Let
\[H_{j \to u} \defeq \min\left\{G_{j \to u}, \min_{2 \le d \le K}F^{d}_{j \to
u}\right\}.\]
\end{itemize}

Fix $(u,i) \in \EE'$ such that $u \in U$ and $i \in V$.
\begin{itemize}
\item Let
\begin{align*}
B_{u \to i} &\defeq \psi_{u \to i}(1,\bullet,i) \\
&= \sum_{k \in \partial u \setminus i} b_{k \to u}(1,\bullet,i).
\end{align*}
\item Let
\begin{align*}
F_{u \to i} &\defeq \min_{c_u \ne
i,*,\bullet} \psi_{u \to i}(1,\bullet,c_u) \\
&= \min_{c_u \ne i,*,\bullet}\sum_{k \in \partial u \setminus i}b_{k \to
u}(1,\bullet,c_u).
\end{align*} 
\item Let
\begin{align*}
G_{u \to i} &\defeq \psi_{u \to i}(*,*,*) \\
&= \beta + \sum_{k \in \partial u \setminus i}b_{k \to u}(*,*,*).\end{align*} 
\item Let
\begin{align*}
H_{u \to i} &\defeq \min \left\{G_{u \to i}, F_{u \to i}\right\} \\
\end{align*} 
\end{itemize}

Note that when $i \in U$ or $j \in U$ there are different
numbers of messages in each direction. To refer to the set of messages from one
node to another we define for $(i,j) \in \EE'$
\begin{equation}\label{eq:all-messages}
X_{j \to i} = \begin{cases} \left(A^3_{j \to i},\ldots, A^{K}_{j \to i},
B^2_{j \to i}, \ldots, B^{K-1}_{j \to i}, F^2_{j \to i}, \ldots, F^{K}_{j
\to i}, G_{j \to i}, H_{j \to i}\right) & i,j \in V \\
\left(A_{j \to i}, F^2_{j \to i}, \ldots, F^{K}_{j \to i}, G_{j \to i}, H_{j
\to i}\right) & j \in V, i \in U \\
\left(B_{j \to i},F_{j \to i}, G_{j \to i}, H_{j \to i}\right) & j \in U, i \in
V.
\end{cases}
\end{equation}
Note that in all cases the number of messages from $i$ to $j$ is $\bo{K}$.

\subsection{Update rules and the complexity of the A-H form.}

The iteration update rules for these new messages are listed in Tables
\ref{table:V-V}-\ref{table:U-V}. The derivation of these update rules from the
definitions of the messages is given in the proof of Theorem
\ref{thm:validity}. These update rules allow us to provide complexity bounds
for the system:

\begin{table}[p!]
\begin{align*}
A^d_{j \to i} &= \delta_{i \to j} +  \sum_{k \in \partial j \setminus i} H_{k
\to j} + \min\left(0, \min_{k \in \partial^V j \setminus i} \left(A^{d+1}_{k
\to j} - H_{k \to j}\right)\right) \\
A^{K}_{j \to i} &= \delta_{i \to j} +\sum_{k \in \partial j \setminus i} H_{k
\to j} \\
 B^2_{j \to i} &= \delta_{j \to i} + \sum_{k \in \partial j \setminus i} H_{k
\to j} + \min_{w \in \partial^U j}\left(B_{w \to j} - H_{w \to j}\right) \\
B^d_{j \to i} &= \delta_{j \to i} + \sum_{k \in \partial j \setminus i} H_{k
\to j} + \min_{k \in \partial^V j \setminus i}\left(B^{d-1}_{k \to j} - H_{k
\to j}\right) \\
F^2_{j \to i} &= \sum_{k \in \partial j \setminus i} H_{k \to j} + \min_{w \in
\partial^Uj}\left(B_{w \to j} - H_{w \to j}\right) + \min\left(0, \min_{k
\in \partial^Vj \setminus i}\left(A^3_{k \to j} - H_{k \to j}\right)\right) \\
F^d_{j \to i} &= \sum_{k \in \partial j \setminus i} H_{k \to j} + \min_{l \in
\partial^Vj \setminus i}\left(B^{d-1}_{l \to j} - H_{l \to j} + \min\left(0,
\min_{k \in \partial^Vj \setminus \{i,l\}}\left(A^{d+1}_{k \to j} - H_{k \to
j}\right)\right)\right) \\
F^{K}_{j \to i} &= \sum_{k \in \partial j \setminus i} H_{k \to j} + \min_{l
\in \partial^Vj \setminus i}\left(B^{K - 1}_{l \to j} - H_{l \to j} \right)
\\
	G_{j \to i} &= \beta + \sum_{k \in \partial j \setminus i} H_{k \to j} \\
	H_{j \to i} &= \min\left(G_{j \to i}, \min_{2 \le d \le K}F^d_{j \to
i}\right)
\end{align*}
 \caption{Messages from $j \in V$ to $i \in V$, for $(i,j)
\in \EE'$ and $3 \le d \le K - 1$}
\label{table:V-V}
\end{table}

\begin{table}[p!]
\begin{align*}
A_{j \to u} &= \sum_{k \in \partial j \setminus u} H_{k \to j} + \min\left(0,
\min_{k \in \partial^Vj} A^3_{k \to j} - H_{k \to j}\right) \\
F^2_{j \to u} &= \sum_{k \in \partial j \setminus u} H_{k \to j} + \min_{w \in
\partial^Uj \setminus u}\left(B_{w \to j} - H_{w \to j}\right) + \min\left(0,
\min_{k \in \partial^Vj}\left(A^3_{k \to j} - H_{k \to j}\right)\right) \\
F^d_{j \to u} &= \sum_{k \in \partial j \setminus u} H_{k \to j} + \min_{l \in
\partial^Vj}\left(B^{d-1}_{l \to j} - H_{l \to j} + \min\left(0, \min_{k \in
\partial^Vj \setminus l} \left(A^{d+1}_{k \to j} - H_{k \to
j}\right)\right)\right)
\\
F^{K}_{j \to u} &= \sum_{k \in \partial j \setminus u} H_{k \to j} + \min_{l
\in \partial^Vj}\left(B^{K-1}_{l \to j} - H_{l \to j}\right) \\
	G_{j \to u} &= \beta + \sum_{k \in \partial j \setminus u} H_{k \to j} \\
	H_{j \to u} &= \min\left(G_{j \to u}, \min_{2 \le d \le K}F^d_{j \to
u}\right) \\
\end{align*}
 \caption{Messages from $j \in V$ to $u \in U$, for $(u,j)
\in \EE'$ and $3 \le d \le K - 1$}
\label{table:V-U}
\end{table}

\begin{table}[p!]
\begin{align*}
 B_{u \to i} &= \sum_{k \in \partial u \setminus i} H_{k \to u} \\
 F_{u \to i} &= \sum_{k \in \partial u \setminus i} H_{k \to u} + \min_{k \in
\partial u \setminus i}\left(A_{k \to u} - H_{k \to u}\right) \\
 G_{u \to i} &= \beta + \sum_{k \in \partial u \setminus i} H_{k \to u} \\
 H_{u \to i} &= \min\left(G_{u \to i}, F_{u \to i}\right)
\end{align*}
 \caption{Messages from $u \in U$ to $i \in V$, for $(u,i)
\in \EE'$}
\label{table:U-V}
\end{table}
 
\begin{theorem}\label{thm:complexity}
The A-H form has a total of $\bo{n\Delta K}$ messages per iteration. Each
iteration requires a total of $\bo{n\Delta K}$ computations to update the
entire system.
\end{theorem}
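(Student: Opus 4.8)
The plan is to prove the two assertions separately: the count of messages by direct enumeration, and the count of computations by exhibiting per-node precomputed quantities that reduce each individual message update to $\bo{1}$ amortized work. \textbf{Counting messages.} The messages are indexed by the ordered pairs $(j,i)$ with $i\in\partial j$, of which there are $\sum_{j\in[n]}\Delta_j=2|\EE'|\le n\Delta$. By \eqref{eq:all-messages}, for each such pair the tuple $X_{j\to i}$ consists of $\bo{K}$ real numbers in every case (at most $2K$ entries when $i,j\in V$, at most $K+3$ when $j\in V$ and $i\in U$, and $4$ when $j\in U$ and $i\in V$). Multiplying, the total number of messages is $\bo{n\Delta K}$.

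\textbf{Counting computations: precomputation.} Evaluating the update rules in Tables~\ref{table:V-V}--\ref{table:U-V} verbatim is too slow, because each rule contains a sum and one or two minimizations over a neighborhood of $j$; the point is that all of these neighborhood aggregates can be computed once per node $j$ from the previous iteration's messages and then reused for every $i\in\partial j$. For each node $j$ I would precompute: (i) $S_j:=\sum_{k\in\partial j}H_{k\to j}$ in $\bo{\Delta_j}$ time, so that each occurrence of $\sum_{k\in\partial j\setminus i}H_{k\to j}=S_j-H_{i\to j}$ costs $\bo{1}$; (ii) $\min_{w\in\partial^U j}(B_{w\to j}-H_{w\to j})$ in $\bo{\Delta_j}$ time; and (iii) for each of the $\bo{K}$ relevant depths $d$, the three smallest values together with the indices attaining them of each of the two sequences $\{A^d_{k\to j}-H_{k\to j}\}_{k\in\partial^V j}$ and $\{B^d_{k\to j}-H_{k\to j}\}_{k\in\partial^V j}$, which is $\bo{\Delta_j}$ per $d$ and hence $\bo{\Delta_j K}$ for the whole node. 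Summed over $j$, this precomputation costs $\bo{\sum_{j}\Delta_j K}=\bo{n\Delta K}$.

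\textbf{Per-message cost and the main obstacle.} Given these precomputed quantities, I claim that every entry of every $X_{j\to i}$ updates in $\bo{1}$ time. A single-exclusion minimum like $\min_{k\in\partial^V j\setminus i}(A^{d+1}_{k\to j}-H_{k\to j})$ is read off from the precomputed top-three list — it is the smallest value whose index differs from $i$ — and the constants $\delta_{\cdot\to\cdot}$ and $\beta$ are immediate; this disposes of the $A^d$, $B^d$, $F^2$, $F^K$, $G$ rules and all the $U$-to-$V$ and $V$-to-$U$ rules. The one genuinely delicate case is the rule for $F^d_{j\to i}$ with $3\le d\le K-1$, whose inner expression is a nested minimization that is quadratic in $\Delta_j$ if read naively. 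The resolution is the identity, valid for $P:=\partial^V j$, $b_l:=B^{d-1}_{l\to j}-H_{l\to j}$, $a_k:=A^{d+1}_{k\to j}-H_{k\to j}$,
\[
\min_{l\in P\setminus i}\Bigl(b_l+\min\bigl(0,\min_{k\in P\setminus\{i,l\}}a_k\bigr)\Bigr)=\min\Bigl(\min_{l\in P\setminus i}b_l,\ \min_{\substack{l,k\in P\setminus i\\ l\neq k}}(b_l+a_k)\Bigr),
\]
which follows from $b_l+\min(0,x)=\min(b_l,b_l+x)$. The first term on the right is a single-exclusion minimum, handled as above. For the second, since at most two indices (namely $i$ and the chosen partner) are forbidden, an optimal $l$ may be taken among the three smallest of $\{b_l\}_{l\in P}$ and an optimal $k$ among the three smallest of $\{a_k\}_{k\in P}$, so the value is obtained by scanning the at most nine admissible pairs drawn from the two precomputed top-three lists, which is $\bo{1}$ work; the $F^d_{j\to u}$ rule of Table~\ref{table:V-U} is analogous with one fewer forced exclusion. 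Finally, $H_{j\to i}=\min(G_{j\to i},\min_{2\le d\le K}F^d_{j\to i})$ costs $\bo{K}$ per ordered pair and so $\bo{n\Delta K}$ in all, which stays within budget. Adding up — $\bo{n\Delta K}$ precomputation, $\bo{n\Delta K}$ messages updated at $\bo{1}$ each, and $\bo{n\Delta K}$ for the $H$'s — yields $\bo{n\Delta K}$ computations per iteration. I expect the reduction of the $F^d$ nested minimum to a scan over a constant number of precomputed extrema to be the crux; the rest is bookkeeping.
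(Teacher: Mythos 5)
Your proposal is correct and follows essentially the same route as the paper's proof: count $\bo{K}$ messages per edge, and per node precompute the neighborhood sum of the $H$'s plus the three smallest values of the relevant $B^{d-1}-H$ and $A^{d+1}-H$ lists so that each outgoing message, including the nested minimization in $F^d_{j\to i}$, is obtained in $\bo{1}$ time from constantly many recorded candidates. The only difference is that you spell out the exchange argument and the nine-pair scan more explicitly than the paper does, which is a matter of detail rather than of approach.
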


\begin{remark}
Recall from \eqref{eq:orig-messages} that the original form
\eqref{eq:min-sum-update} has a total of $\bo{n\Delta^3K}$ messages, while
by Theorem \ref{thm:complexity} the A-H form has $\bo{n \Delta K}$ total
messages.

To perform an iteration of the system, the original form
\eqref{eq:min-sum-update} requires $\bo{n^2\Delta^6K^2}$ computations by
\eqref{eq:orig-computations}, while the A-H form requires only $\bo{n 
\Delta K}$.
\end{remark}

\begin{proof}[Proof of Theorem \ref{thm:complexity}]
For each undirected edge, the A-H form has $\bo{K}$ messages, so the total
number of messages is $\bo{|\EE'|K} = \bo{n\Delta K}$, as desired.

In performing an iteration of the A-H form we proceed node-by-node. As we will
show below, at a node $j$ we will perform $\bo{K\Delta_j}$ total
computations. This implies that we require $\bo{K\sum_{j \in [n]}\Delta_j} =
\bo{n\Delta K}$ total computations for a full iteration of the A-H form, as
claimed.

We now provide more detail on the computations required per node $j$. We 
consider the case $j \in V$; if $j \in U$ the analysis is similar. We will also
focus our attention on the most complex message to compute, which is $F^d_{j
\to i}$ for $3 \le d \le K-1$. We will require $\bo{\Delta_j}$ computations
to compute this for all $i \in \partial j$ and thus $\bo{K\Delta_j}$
computations for all messages sent from $j$ because there are $\bo{K}$
messages from $j \to i$ for each $i \in \partial j$.

To compute $F^d_{j \to i}$ it is necessary to compute two values: 
\[\sum_{k \in \partial j \setminus i} H_{k \to j},\] 
and
\begin{equation}\label{eq:parent-child}
\min_{l \in
\partial^Vj \setminus i}\left(B^{d-1}_{l \to j} - H_{l \to j} + \min\left(0,
\min_{k \in \partial^Vj \setminus \{i,l\}}\left(A^{d+1}_{k \to j} - H_{k \to
j}\right)\right)\right).	
\end{equation}
Rather than compute these directly for each $i$, we first compute
\begin{equation}\label{eq:H-sum}
\sum_{k \in \partial j} H_{k \to j},
\end{equation}
select the three neighbors $l \in \partial^V j$ with the smallest values of
\begin{equation}\label{eq:comp-parent}
 B^{d-1}_{l \to j} - H_{l \to j},
\end{equation}
and select the three neighbors $l \in \partial^V j$ with the smallest values of
\begin{equation}\label{eq:comp-child}
A^{d+1}_{l \to j} - H_{l \to j}.
\end{equation}
We also record the values of \eqref{eq:comp-parent} and \eqref{eq:comp-child}
for the nodes we record.

Once we have these quantities, for each $i \in \partial j$ we can compute
$F^d_{j \to i}$ in $\bo{1}$ computations. To compute $\sum_{k \in \partial j
\setminus i} H_{k \to j}$ we simply subtract $H_{i \to j}$ from
\eqref{eq:H-sum}. To compute \eqref{eq:parent-child} we pick $l$ and $k$
according to the minimum values we have recorded, subject to the constraints
that $l \ne i$, $k \ne i$, and $l \ne k$. These constraints are the reason we
record three  minimizing values, as the first two choices for either $l$ or $k$
may be eliminated by matching $i$ or the other choice. For a given $i
\in \partial j$ we compute $F^d_{j \to i}$ in $\bo{1}$ computations so
computing all the $F^d$ messages sent out from $j$ requires $\bo{\Delta_j}$
computations.

All of the other messages are computed using similar patterns and there are
$\bo{K}$ of them so a total of $\bo{K\Delta_j}$ total computations are required
to compute all the messages sent out from $j$.
\end{proof}

\subsection{Initialization and termination.}
The equations in Tables \ref{table:V-V}-\ref{table:U-V} collectively define
update rules for our algorithm. To initialize the system we set $X^0_{j \to i}
= 1$ for all $(i,j) \in \EE'$, by which we mean we set each element of $X^0_{j
\to i}$ to 1.

Given some a priori bound $T$ on the number of iterations, we then compute
$X^t$ from $X^{t-1}$ for $t \in [T]$ according to the update rules in Tables
\ref{table:V-V}-\ref{table:U-V}. If the messages reach a fixed point with $X^t
= X^{t+1}$ for some $t < T$ we halt the algorithm.

\subsection{Validity of the A-H Form.}
Identities relating original form messages $b_{j \to i}(d_i,p_i,c_i)$ to A-H
form messages for various choices of $d_i$, $p_i$, and $c_i$ are given in
Tables \ref{table:b-V-V}-\ref{table:b-U-V}. It can be checked that for any
choice of $(d_i,p_i,c_i)$ that does not appear in a table we have $b_{j \to
i}(d_i,p_i,c_i) = \infty$ by \eqref{eq:update-b} because no choice of
$(d_j,p_j,c_j)$ can satisfy $h_{ij}(d_i,p_i,c_i,d_j,p_j,c_j) = 1$. The validity
of these identities given the definitions of the A-H form variables is proved
in Lemma \ref{thm:identities} below.

\begin{table}[p!]
\centering
\begin{tabular}{|>{$}c<{$}|>{$}c<{$}|>{$}c<{$}|>{$}c<{$}|}
	\hline
d_i & p_i & c_i & b_{j \to i}(d_i,p_i,c_i) \\
	\hline
* & * & * & H_{j \to i} \\
2 & p \in \partial^U i & j & A^3_{j \to i} \\
2 & p \in \partial^U i & c \in \left(\partial^V i \setminus j\right) \cup 
\{\bullet\}& H_{j \to i} \\
3 \le d \le K - 1 & j & c \in \left(\partial^V i \setminus j\right) \cup 
\{\bullet\}& B^{d-1}_{j \to i} \\
3 \le d \le K - 1 & p \in \partial^V i \setminus j & j & A^{d+1}_{j \to i} \\
3 \le d \le K - 1 & p \in \partial^V i \setminus j & c \in \left(\partial^Vi
\setminus j\right) \cup \{\bullet\} & H_{j \to i} \\
K & j & \bullet & B^{K-1}_{j \to i} \\
K & p \in \partial^V i \setminus j & \bullet & H_{j \to i} \\\hline	
\end{tabular}
\caption{Classes of $b_{j \to i}$ messages for $i,j \in V$.}
\label{table:b-V-V}
\end{table}

\begin{table}[p!]
\centering
\begin{tabular}{|>{$}c<{$}|>{$}c<{$}|>{$}c<{$}|>{$}c<{$}|}
	\hline
d_u & p_u & c_u & b_{j \to u}(d_u,p_u,c_u) \\
	\hline
* & * & * & H_{j \to u} \\
1 & \bullet & j & A_{j \to u} \\
1 & \bullet & c \in \left(\partial^V u \setminus j\right) & H_{j \to u} \\
\hline	
\end{tabular}
\caption{Classes of $b_{j \to u}$ messages for $j \in V$, $u \in U$.}
\label{table:b-V-U}
\end{table}

\begin{table}[p!]
\centering
\begin{tabular}{|>{$}c<{$}|>{$}c<{$}|>{$}c<{$}|>{$}c<{$}|}
	\hline
d_i & p_i & c_i & b_{u \to i}(d_i,p_i,c_i) \\
	\hline
* & * & * & H_{j \to i} \\
2 & u & c \in \partial^V i \cup \{\bullet\} & B_{u \to i} \\
2 & p \in \partial^U i \setminus u & c \in \left(\partial^V i\right) \cup
\{\bullet\}& H_{u \to i} \\
3 \le d \le K - 1 & p \in \partial^V i & c \in \left(\partial^Vi
\setminus j\right) \cup \{\bullet\} & H_{u \to i} \\
K & p \in \partial^V i & \bullet & H_{u \to i} \\\hline	
\end{tabular}
\caption{Classes of $b_{u \to i}$ messages for $u \in U$, $i \in V$.}
\label{table:b-U-V}
\end{table}

To use the original form \eqref{eq:min-sum-update} to find
solutions to \eqref{eq:opt-prob} we need to compute the max-marginals
\eqref{eq:max-marginals} for different values of $(d_i,p_i,c_i)$ from the
messages $b_{k \to j}(d_j,p_j,c_j)$. Specifically, we need to find the choices
of $(d_i,p_i,c_i)$ that maximize $\bar{p}_i(d_i,p_i,c_i)$. Thus to show that
we can iterate the A-H form to find the same approximate solutions we would
find by iterating the original form \eqref{eq:min-sum-update} we prove that the
max-marginal can be computed from the A-H form. In particular, we define
\begin{definition}\label{def:rho_i}
For node $i \in [n]$, let $\rho_i$ be any real function of A-H variables
$\{X_{j \to i}\}_{j \in \partial i}$ defined by substituting for $b_{k \to
j}(d_j,p_j,c_j)$ in \eqref{eq:max-marginals} according to the identities
represented by Tables \ref{table:b-V-V}-\ref{table:b-U-V}.	
\end{definition}
We prove the following:
\begin{theorem}\label{thm:validity}
For all $(j,i) \in \EE'$, let the original form be initialized with $b^0_{j \to
i}(d_i,p_i,c_i) = 1$ for all $(d_i,p_i,c_i)$ and let the A-H form be
initialized with $X^0_{j \to i} = 1$. Then for each $i \in [n]$, $d_i
\in [K] \cup \{*\}$, $p_i \in \partial i \cup \{*,\bullet\}$, $c_i \in
\partial^V i \cup \{*,\bullet\}$ and for $0 < t \le T$,
\[ \bar{p}_i(d_i,p_i,c_i) = \rho_i\left(\{X^t_{j \to i}\}_{j \in \partial
i}\right)\]
when both sides are computed with appropriate messages $b^t_{j \to i}$ and
$X^t_{j\to i}$ updated $t$ times according to \eqref{eq:min-sum-update} and
Tables \ref{table:V-V}-\ref{table:U-V}, respectively.
\end{theorem}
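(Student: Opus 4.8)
The plan is to prove Theorem~\ref{thm:validity} by induction on the iteration count $t$, reducing it to the two lemmas announced in the excerpt. The key observation is that the max-marginal $\bar p_i(d_i,p_i,c_i)$ in \eqref{eq:max-marginals} depends only on the incoming original-form messages $\{b_{k\to i}(d_i,p_i,c_i)\}_{k\in\partial i}$ evaluated at the single argument $(d_i,p_i,c_i)$, and the function $\rho_i$ is by Definition~\ref{def:rho_i} exactly the result of substituting the Table~\ref{table:b-V-V}--\ref{table:b-U-V} identities into that same formula. So it suffices to establish two things: (i) for every $(j,i)\in\EE'$ and every admissible $(d_i,p_i,c_i)$, the value $b^t_{j\to i}(d_i,p_i,c_i)$ equals the corresponding entry of the relevant table evaluated at $X^t_{j\to i}$ (this is Lemma~\ref{thm:identities}), and (ii) the A-H variables $X^t_{j\to i}$ produced by the update rules in Tables~\ref{table:V-V}--\ref{table:U-V} are consistent with these identities, i.e.\ if $X^{t-1}$ satisfies the identities then so does $X^t$ (this is Lemma~\ref{thm:updates}). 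Granting both, one step of the original-form update \eqref{eq:min-sum-update} and one step of the Table update produce messages satisfying the same identities, and plugging into \eqref{eq:max-marginals} versus Definition~\ref{def:rho_i} gives the claimed equality for all $0<t\le T$.

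For the base case $t=0$ the original form is initialized with $b^0_{j\to i}\equiv 1$ and the A-H form with $X^0_{j\to i}\equiv 1$; one checks directly that the table identities hold at $t=0$ (each table entry is a min or sum of constant-$1$ A-H messages, and the corresponding $b^0$ is $1$ wherever $h_{ij}$ permits, $\infty$ otherwise --- here one must note that even at initialization the constraint $h_{ij}=1$ in \eqref{eq:update-b} already forces $b^0_{j\to i}(d_i,p_i,c_i)=\infty$ outside the listed classes, as remarked in the text). For the inductive step I would work from the two-step reformulation \eqref{eq:update-b}--\eqref{eq:update-psi}: fix the target variables $(d_i,p_i,c_i)$, enumerate which configurations $(d_j,p_j,c_j)$ satisfy $h_{ij}(d_i,p_i,c_i,d_j,p_j,c_j)=1$ using the explicit forms \eqref{eq:node-valid-def-np}--\eqref{eq:edge-valid-def-neither}, and observe that the minimum of $\psi_{j\to i}(d_j,p_j,c_j)$ over that set is by definition one of the A-H variables $A,B,F,G,H$ (or a min/sum thereof). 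Substituting the inductive hypothesis --- that each $b_{k\to j}(\cdot)$ appearing inside $\psi_{j\to i}$ equals a table entry in the A-H variables $X^{t-1}_{k\to j}$ --- and then performing the inner minimizations over the free coordinates of $(d_j,p_j,c_j)$ collapses the expression to exactly the right-hand sides in Tables~\ref{table:V-V}--\ref{table:U-V}; this simultaneously verifies (i) at level $t$ and derives the update rules, which is why the text says the update rules are "derived in the proof."

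The main obstacle will be the bookkeeping in the inner minimizations, particularly for the $F^d$ messages, where one simultaneously chooses a parent-contributing neighbor $l$ and a child-contributing neighbor $k$ subject to $l\ne k$, $l\ne i$, $k\ne i$, and the depth-coupling $d_i=d_j\pm1$ has to be threaded correctly through which A-H superscript ($A^{d+1}$, $B^{d-1}$) appears. One must carefully separate the neighbors of $j$ into $\partial^U j$ and $\partial^V j$ (since a root parent contributes a $B_{w\to j}$ rather than a $B^{d-1}_{w\to j}$, forcing $d_j=2$), handle the boundary depths $d=2,3,K-1,K$ as special cases exactly as the tables do, and check that the "$\min(0,\cdot)$" terms correctly encode the option of the relevant neighbor playing the neutral role $H_{k\to j}$ instead of the parent/child role. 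The three U--V message classes and the asymmetry in message counts from \eqref{eq:all-messages} mean the case analysis has several branches, but each branch is a finite and mechanical --- if tedious --- unpacking of the indicator definitions, with no genuinely hard step once the induction and the two-lemma decomposition are in place.
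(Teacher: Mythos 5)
Your proposal is correct and takes essentially the same route as the paper: the paper proves the theorem precisely by combining Lemma \ref{thm:identities} (the table identities hold for the A-H variables) with Lemma \ref{thm:updates} (the table update rules follow from the original update \eqref{eq:min-sum-update}), which is your induction on $t$ stated implicitly, and your sketch of the lemma proofs (enumerating the configurations with $h_{ij}=1$ via \eqref{eq:update-b}--\eqref{eq:update-psi}, collapsing the inner minimizations, and special-casing boundary depths and $U$/$V$ neighbors) matches the paper's representative-case arguments. One cosmetic nit: at $t=0$ the original messages are literally initialized to $1$ everywhere (the $\infty$ values outside the tabulated classes arise only after the first application of \eqref{eq:update-b}), but since the theorem concerns $t\ge 1$ and the first update reads $b^0$ only at configurations permitted by $h_{ij}$, where both initializations agree, this does not affect your argument.
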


We will use two lemmas to prove this theorem:
\begin{lemma}\label{thm:identities}
The identities described in Tables \ref{table:b-V-V}-\ref{table:b-U-V} hold.
\end{lemma}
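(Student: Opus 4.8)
The plan is to verify the identities in Tables \ref{table:b-V-V}--\ref{table:b-U-V} by direct case analysis, working from the definition \eqref{eq:update-b} of $b_{j \to i}(d_i,p_i,c_i)$ as a minimum of $\psi_{j \to i}(d_j,p_j,c_j)$ over all triples $(d_j,p_j,c_j)$ satisfying $h_{ij}(d_i,p_i,c_i,d_j,p_j,c_j) = 1$. For each row of each table, I would first unpack the constraint $h_{ij} = 1$ using the definitions \eqref{eq:node-valid-def-np}--\eqref{eq:node-valid-def-non-root} of $f_i,f_j$ and \eqref{eq:edge-valid-def-j-parent}--\eqref{eq:edge-valid-def-neither} of $g_{ij}$, to determine exactly which $(d_j,p_j,c_j)$ are admissible given the fixed $(d_i,p_i,c_i)$ of that row. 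Then I would recognize the resulting minimization of $\psi_{j \to i}$ over that admissible set as precisely one of the A-H variables defined in \eqref{eq:var-def-A}--\eqref{eq:var-def-H} (or the corresponding $U$-side definitions), possibly after taking an outer minimum over a depth index or over the ``$j$ is the parent'' versus ``neither'' sub-cases.

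Concretely, I would organize the argument by the three node-type cases $i,j\in V$; $j\in V, u\in U$; $u\in U, i\in V$, matching the three tables, and within each by the value of $d_i$ (or $d_u$) and the relation of $p_i,c_i$ to $j$. For the $i,j\in V$ table: the row $(*,*,*)$ forces, via line \eqref{eq:edge-valid-def-neither} of $g_{ij}$ together with $f_j=1$, that $j$ is in any of its valid states except those pointing at $i$, so $b_{j\to i}(*,*,*)$ is the minimum of $\psi_{j\to i}$ over all such states, which is exactly $H_{j\to i}$ by \eqref{eq:var-def-H} (the $G$ term handles $j$ non-participating, the $F^d$ terms handle $j$ participating with neither endpoint at $i$). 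The row with $p_i=j$, $d_i=d$ forces line \eqref{eq:edge-valid-def-j-parent}: $c_j=i$, $d_j=d-1$, $p_j\ne i$, $c_i\ne j$; minimizing $\psi_{j\to i}(d-1,p_j,i)$ over $p_j\ne *,i,\bullet$ and adding the edge-existence term $\delta_{j\to i}$ gives $B^{d-1}_{j\to i}$. The row with $c_i=j$, $d_i=d$ forces line \eqref{eq:edge-valid-def-i-parent}: $p_j=i$, $d_j=d+1$, hence the minimum over $c_j\ne *,i$ of $\psi_{j\to i}(d+1,i,c_j)$ plus $\delta_{i\to j}$, which is $A^{d+1}_{j\to i}$; I would need to track the boundary values ($d_i=2$ giving $A^3$, $d_i=K$ forcing $c_i=\bullet$, $p_i=\bullet$ forcing $i\in U$ so this case does not arise for $i\in V$) to match the table's special rows. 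The rows where $p_i$ and $c_i$ are both unrelated to $j$ again reduce to the ``neither'' case \eqref{eq:edge-valid-def-neither} and hence to $H_{j\to i}$. The $U$-side tables are handled the same way, using that a root node has $d_u=1$, $p_u=\bullet$, so only lines \eqref{eq:edge-valid-def-i-parent} (with $u$ the parent) and \eqref{eq:edge-valid-def-neither} can be active, and that $c_u$ must lie in $V$.

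The main obstacle will be bookkeeping rather than any conceptual difficulty: making sure that in each case the constraints extracted from $h_{ij}=1$ exactly match the index ranges and the excluded values ($p_j\ne *,i,\bullet$; $c_j\ne *,i$; $p_j\ne c_j$; the depth bounds $2\le d\le K$ etc.) appearing in the A-H definitions, and correctly handling the boundary depths $d=2$, $d=3$, $d=K-1$, $d=K$ where some of the $A^d$, $B^d$, $F^d$ variables are undefined or collapse. A subsidiary point worth stating explicitly is the claim in the text preceding the lemma that any $(d_i,p_i,c_i)$ not appearing in a table yields $b_{j\to i}(d_i,p_i,c_i)=\infty$: I would justify this by observing that for every such triple the joint constraint $f_i(d_i,p_i,c_i)=1$ already fails, or $f_i=1$ but no $(d_j,p_j,c_j)$ with $f_j=1$ can satisfy $g_{ij}$ (e.g.\ $c_i=j$ but $d_i=K$, which would force $d_j=K+1$), so the minimand set in \eqref{eq:update-b} is empty. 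Once every row is checked and the ``missing rows give $\infty$'' remark is dispatched, the lemma follows, since $\rho_i$ in Definition \ref{def:rho_i} is by construction obtained by substituting exactly these identities into \eqref{eq:max-marginals}.
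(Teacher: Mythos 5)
Your proposal is correct and follows essentially the same route as the paper: fix a row, unpack the constraint $h_{ij}=1$ via the definitions of $f$ and $g$ to determine the admissible $(d_j,p_j,c_j)$, and recognize the resulting minimization of $\psi_{j\to i}$ (with the appropriate $\delta$ term) as the corresponding A-H variable, with the unlisted rows handled by noting the minimand set is empty. The paper merely carries this out explicitly for two representative identities ($A^{d+1}_{j\to i}$ and $H_{j\to i}$ in Table \ref{table:b-V-V}) and declares the rest analogous, so your more systematic case plan, including the boundary depths and the $U$/$V$ distinctions, is consistent with it.
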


\begin{proof}
Fix $i,j \in V$.

We focus on two representative messages, namely $A^d_{j \to i}$ for $3 \le d
\le K - 1$ and $H_{j \to i}$. In the first example we will prove
the fifth identity of Table \ref{table:b-V-V} and in the second will prove the 
sixth. The other identities described in Table \ref{table:b-V-V} and all
identities described in Tables \ref{table:b-V-U}-\ref{table:b-U-V} are proved
similarly so we omit the details.

We will first show that the fifth identity of Table \ref{table:b-V-V} holds.
Namely, for $3 \le d \le K - 1$ and $p \in \partial^V i \setminus j$ we will
show
\begin{equation}\label{eq:b-A}
b_{j \to i}(d,p,j) = A^{d+1}_{j \to i}.	
\end{equation}
Plugging $d_i = d$, $p_i = p$, and $c_i = j$ into \eqref{eq:update-b} gives us
\begin{align}
b_{j \to i}(d,p,j) &= \min_{\substack{d_j,p_j,c_j \st \\
h_{ij}(d,p,j,d_j,p_j,c_j) = 1}}\psi_{j \to i}(d_j,p_j,c_j) \label{eq:b-A-first}
\end{align}

Suppose $(i,j) \not\in \EE$. Then $h_{ij}(d,p,j,d_j,p_j,c_j) = 0$ for any
choice of $d_j,p_j,c_j$ and thus $b_{j \to i}(d,p,j) = \infty$. By
\eqref{eq:var-def-A} we have $A^{d+1}_{j \to i} = \infty$, so $b_{j \to
i}(d,p,c) = A^{d+1}_{j \to i}$, as desired.

Now suppose $(i,j) \in \EE$. We now use $h_{ij}(d,p,j,d_j,p_j,c_j) = 1$ to
infer information about $d_j,p_j,c_j$. From \eqref{eq:edge-valid-def-i-parent}
we must have $d_j = d+1$, $p_j = i$, and $c_j \ne i$. Then
\eqref{eq:node-valid-def-non-root} further implies $c_j \ne *$. Substituting
these facts into \eqref{eq:b-A-first} shows
\begin{align*}
b_{j \to i}(d,p,j) &= \min_{c_j \ne *,i}\psi_{j \to i}(d+1,i,c_j).
\end{align*}
Observe that because $\delta_{i \to j} = 0$ the right hand is the definition of
$A^{d+1}_{j \to i}$ as given in \eqref{eq:var-def-A}, so we conclude
\[b_{j \to i}(d,p,j) = A^{d+1}_{j \to i},\]
as desired.

Now we will show the sixth identity of Table \ref{table:b-V-V} holds. Namely,
for $3 \le d \le K - 1$, $p \in \partial^V i \setminus j$, and $c \in
(\partial^V i \setminus j)\cup \{\bullet\}$, we will show 
\begin{align*}
 b_{j \to i}(d,p,c) &= H_{j \to i}.
\end{align*}
Plugging $d_i = d$, $p_i = p$, and $c_i = c$ into \eqref{eq:update-b} gives us
\begin{align*}
b_{j \to i}(d,p,c) &= \min_{\substack{d_j,p_j,c_j \st \\
h_{ij}(d,p,c,d_j,p_j,c_j) = 1}}\psi_{j \to i}(d_j,p_j,c_j).
\end{align*}
We now want to break the minimization into cases based on how we satisfy
$h_{ij}(d,p,c,d_j,p_j,c_j) = 1$. First observe that $p \ne j$ and $c \ne j$
implies we must be in case \eqref{eq:edge-valid-def-neither}, which implies
$p_j \ne i$ and $c_j \ne i$. Then there are two possible cases for how to
satisfy $f_j(d_j,p_j,c_j) = 1$.

First we can choose \eqref{eq:node-valid-def-np} and let $d_j = p_j = c_j = *$.
Otherwise we are in \eqref{eq:node-valid-def-non-root} so we have $d_j = d'$
for some $2 \le d \le K$, $p_j \not\in \{*,\bullet\}$ and $c_j
\not\in \{*,p_j\}$. Combining these cases implies
\begin{align}
b_{j \to i}(d,p,c) &= \min\left\{\psi_{j \to i}(*,*,*),\min_{2 \le d' \le
K}\min_{\substack{p_j \ne *,i,\bullet \\ c_j \ne *,i \\p_j \ne c_j}}\psi_{j
\to i}(d',p_j,c_j)\right\}. \notag
\end{align}
Observe that the right hand side contains the definitions of $G_{j \to i}$ from
\eqref{eq:var-def-G} and $F^{d'}_{j \to i}$ from \eqref{eq:var-def-F}, so we
have
\begin{align}\label{eq:ex-b-H}
b_{j \to i}(d,p,c) &= \min\left\{G_{j \to i}, \min_{2 \le d' \le K}F^{d'}_{j
\to i}\right\} 
= H_{j \to i},
\end{align}
as desired, where the last step is applying the definition of $H_{j \to i}$
from \eqref{eq:var-def-H}.
\end{proof}

\begin{lemma}\label{thm:updates}
The relations in Tables \ref{table:V-V}-\ref{table:U-V} for updating A-H form 
messages are valid.
\end{lemma}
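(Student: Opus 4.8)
The plan is to show that each update rule in Tables \ref{table:V-V}--\ref{table:U-V} is obtained by composing two facts: (i) the defining equations \eqref{eq:var-def-A}--\eqref{eq:var-def-H} and their $V$-to-$U$ and $U$-to-$V$ analogues express each A-H message as a minimum, over admissible configurations $(d_j,p_j,c_j)$, of $\psi_{j\to i}(d_j,p_j,c_j)$, which by \eqref{eq:update-psi} equals $-\log\phi_j(d_j,p_j,c_j) + \sum_{k\in\partial j\setminus i} b_{k\to j}(d_j,p_j,c_j)$; and (ii) Lemma \ref{thm:identities}, which rewrites every finite value of $b_{k\to j}(d_j,p_j,c_j)$ as one of the A-H messages sent from $k$ to $j$ (and $\infty$ otherwise). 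Substituting (ii) into (i) termwise turns each A-H message into a minimization of a sum of A-H messages indexed by the neighbors of $j$, i.e.\ a rule expressing the A-H messages leaving $j$ purely in terms of the A-H messages entering $j$ — which is exactly what the tables assert. The $-\log\phi_j$ term contributes $\beta$ precisely when $(d_j,p_j,c_j)=(*,*,*)$, which accounts for the $\beta$ in every $G$-update, and $0$ otherwise.

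The heart of the argument is a min/sum interchange. For a fixed admissible $(d_j,p_j,c_j)$, at most two neighbors of $j$ play a distinguished role: the one named as $p_j$, which contributes $B^{d_j-1}_{p_j\to j}$ when $p_j\in\partial^V j\setminus i$ (or $B_{p_j\to j}$ when $p_j\in\partial^U j$ and $d_j=2$), and the one named as $c_j$, which contributes $A^{d_j+1}_{c_j\to j}$ when $c_j\in\partial^V j\setminus i$; by the identity tables every other neighbor $k$ contributes $H_{k\to j}$. Hence $\sum_{k\in\partial j\setminus i} b_{k\to j}(d_j,p_j,c_j)$ equals $\sum_{k\in\partial j\setminus i} H_{k\to j}$ plus whichever of the difference terms $(B^{d_j-1}_{p_j\to j}-H_{p_j\to j})$ and $(A^{d_j+1}_{c_j\to j}-H_{c_j\to j})$ is instantiated. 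Minimizing over $(d_j,p_j,c_j)$ now separates: the $\sum H$ part is constant, and the remaining minimization runs over which neighbors (if any) serve as parent and child, subject to $p_j\ne c_j$ and $p_j,c_j\ne i$, producing precisely the nested minima of the tables, with the ``$\min(0,\cdot)$'' layer encoding the choice $c_j=\bullet$ (no child) and, for $F^d$, the inner minimum over $\partial^V j\setminus\{i,l\}$ enforcing $p_j\ne c_j$. I would carry this out in full only for $F^d_{j\to i}$ with $3\le d\le K-1$ in Table \ref{table:V-V}, since that is the unique message in which both a parent and a child role appear and it therefore exhibits every feature of the computation; the remaining rows of Table \ref{table:V-V} ($A^d$, $A^K$, $B^2$, $B^d$, $F^2$, $F^K$, $G$, and the purely definitional $H$) and all of Tables \ref{table:V-U} and \ref{table:U-V} follow the same pattern with fewer moving parts, and I would state them as analogous.

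The main obstacle is bookkeeping around the boundary values of $d$ and the exclusions, not any conceptual difficulty. One must verify that for $d_j=2$ the parent of $j$ is necessarily a root node, which forces the switch from $B^{d-1}$ to $B_{w\to j}$ with $w\in\partial^U j$ and changes the index set of the parent minimization; that for $d_j=K$ a genuine node cannot serve as the child, so only $c_j=\bullet$ survives and the ``$\min(0,\cdot)$'' layer disappears; and that configurations assigned $\infty$ by the identity tables — such as a would-be parent in $\partial^U j$ at depth $\ge 3$, or a node at depth $K$ with a child — simply drop out of the minimization. Propagating the $\setminus i$ and $\setminus\{i,l\}$ exclusions consistently through each case is the only delicate part, but once the min/sum interchange above is in place, each update rule reads off directly from the matching identity table.
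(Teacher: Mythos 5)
Your proposal is correct and follows essentially the same route as the paper: start from the definitional minimization of $\psi_{j\to i}$ in \eqref{eq:update-psi}, substitute the A-H identities of Lemma \ref{thm:identities} for each $b_{k\to j}$, note that disallowed configurations (e.g.\ a parent in $\partial^U j$ at depth $\ge 3$) contribute $\infty$ and drop out, and pull the common $\sum_{k\in\partial j\setminus i}H_{k\to j}$ out of the minimization to obtain the nested-minima form of the tables. The only difference is cosmetic: you work out $F^d_{j\to i}$ as the representative case while the paper works out $A^d_{j\to i}$, with the remaining rows handled by the same pattern in both.
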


\begin{proof}
We will show how to derive one representative update equation, namely	
\begin{equation}\label{eq:ex-update-A}
A^d_{j \to i} = \delta_{i \to j} + \sum_{k \in \partial j \setminus i}H_{k
\to j} + \min\left\{0, \min_{k \in \partial^V j \setminus i}\left(A^{d+1}_{k
\to j} - H_{k \to j}\right)\right\},
\end{equation}
for $3 \le d \le K -1$, which is the first equation of Table \ref{table:V-V}.
The remainder of Table \ref{table:V-V} is derived via similar methods, as are
the relations in Tables \ref{table:V-U} and \ref{table:U-V}. 

Recall \eqref{eq:var-def-A-b-form}:
\begin{align*}
A^{d}_{j \to i} &= \delta_{i \to j} + \min_{c_j \ne *,i} \sum_{k \in
\partial j \setminus i}b_{k \to j}(d, i, c_j).
\end{align*}
We can split this minimization into two cases, namely $c_j = \bullet$ and $c_j
\in \partial j \setminus i$, which allows us to write
\begin{align*}
A^d_{j \to i} &= \delta_{i \to j} + \min\left\{\sum_{k \in \partial j \setminus
i} b_{k \to j}(d,i,\bullet),\right.  \\
&\hspace{0.2\textwidth} \left.\min_{k\in\partial
j \setminus i} \left(b_{k \to j}(d,i,k) + \sum_{l \in \partial j \setminus
\{i,k\}} b_{l \to j}(d,i,k)\right) \right\}
\end{align*}
We now replace each instance of $b_{k \to j}$ with the appropriate value 
according to the identities in Tables \ref{table:b-V-V} and \ref{table:b-U-V}.
First note that for $k \in \partial^Uj \setminus i$ we have $b_{k \to
j}(d,i,k)= \infty$ because $d \ge 3$ and $k \in U$ implies
$h_{jk}(d,i,k,d_k,p_k,c_k) = 0$ since $g_{jk}(d,i,k,d_k,p_k,c_k) = 1$ if and
only if $d_k = d + 1$ and if $d_k = d + 1 > 1$ then $f_k(d_k,p_k,c_k) = 0$.
Thus minimizing over $k \in \partial j \setminus i$ is equivalent to minimizing
over $k \in \partial^V j \setminus i$ so we can write
\begin{align*}
A^{d}_{j \to i}  &= \delta_{i \to j} + \min\left\{\sum_{k \in \partial j
\setminus i} b_{k \to j}(d,i,\bullet),\right. \\
&\hspace{0.2\textwidth}\left. \min_{k\in\partial^V
j \setminus i} \left(b_{k \to j}(d,i,k) + \sum_{l \in \partial j \setminus
\{i,k\}} b_{l \to j}(d,i,k)\right) \right\}.
\end{align*}
We now replace these particular messages via $b_{k \to j}(d,i,k) = A^{d+1}_{k
\to j}$ from \eqref{eq:b-A} (which is the fifth line of Table
\ref{table:b-V-V}) and $b_{l \to j}(d,i,k) = H_{l \to j}$ and $b_{k
\to j}(d,i,\bullet) = H_{k \to j}$ from \eqref{eq:ex-b-H} (which is the
sixth line of Table \ref{table:b-V-V}), so we have
\begin{align*}
A^d_{j \to i} &= \delta_{i \to j} + \min\left\{\sum_{k \in \partial j
\setminus i} H_{k \to j}, \min_{k \in \partial j \setminus i} \left(A^{d+1}_{k
\to j} + \sum_{l \in \partial j \setminus \{i,k\}} H_{l \to j} \right)\right\}
\\
&= \delta_{i \to j} + \sum_{k \in \partial j \setminus i}H_{k \to j} +
\min\left\{0, \min_{k \in \partial j \setminus i}\left(A^{d+1}_{k \to j} - H_{k
\to j}\right)\right\},
\end{align*}
where the second line simply extracts the sum from the minimization. This
equation now matches \eqref{eq:ex-update-A}.
\end{proof}

We are now prepared to prove Theorem \ref{thm:validity}.

\begin{proof}[Proof of Theorem \ref{thm:validity}]
Tables \ref{table:b-V-V}-\ref{table:b-U-V} directly provide the functions
$\rho_i$ required to compute max-marginals. This substitution is valid after
any number of iterations because it is valid from the original definitions of
the A-H form variables by Lemma \ref{thm:identities} and Lemma
\ref{thm:updates} proves the A-H variables are iterated by rules that follow
directly from the iteration rules for the original form variables.
\end{proof}

\section{Constructing solutions.}\label{sec:reconstruction}
Because we have no guarantee that messages will converge, or that if they
converge they will converge to an optimal solution, we elect to compute a 
feasible solution at every iteration of the algorithm and return the best
solution found over all iterations. In particular we run our algorithm for a
fixed number of iterations unless we converge to a fixed point before this
limit. Finding all these solutions comes at an additional computational cost,
but it gives consistently better solutions than simply using the result at the
end of the iterations.

To initialize our solution construction algorithm, let $\mathcal{A} = V$ be
the set of non-root nodes which we have not yet assigned to a path. Nodes will
be removed from this set as we determine their role in our constructed
solution. For each node $u \in U$, we will construct a (possibly empty) path
$P_u$, and our full solution will be the collection of these paths.  The first
step is to compute the max-marginals for the possible choices at node $u$,
which are either ``do not participate in a path'' (in which case $P_u =
\emptyset$) or ``start a path by choosing a child from among available
neighbors of $u$''. We want to maximize the max-marginal among these choices. 
Since our variables are in Min-Sum form we actually compute the negative log of
the max-marginals and choose the minimizing configuration. To choose a
configuration at the node $u$, we use the functions $\rho_i$ from Definition
\ref{def:rho_i} to compute
\begin{align}
-\log \bar{p}_u(*,*,*) &= \sum_{k \in \partial u}H_{k \to u} + \beta
\label{eq:max-marg-root-dnp}\\
-\log \bar{p}_u(1,\bullet,j) &= \sum_{k \in \partial u}H_{k \to u} + A_{j \to
u} - H_{j \to u}, \quad j \in \partial^V u \cap \mathcal{A}.
\label{eq:max-marg-root-start}
\end{align}
and find the minimum value. If the minimum is \eqref{eq:max-marg-root-dnp},
we let $P_u = \emptyset$ and proceed to the next root node. If the minimum is
\eqref{eq:max-marg-root-start} for some neighbor $j$, we initialize our new
path as $P_u = \{u,j\}$ and remove $j$ from the set of available nodes by
setting $\mathcal{A} = \mathcal{A} \setminus \{j\}$. Note that
\eqref{eq:max-marg-root-start} considers only neighbors which are also elements
of $\mathcal{A}$ insuring that $P_u$ does not include any nodes used in a
previous path. Further note that we must have $(u,j) \in \EE$ so this is a
valid start to a path because $j \in \partial^V u$ and $\EE$ contains no edges
in $V \times U$. We next proceed to the node $j$ to determine whether $P_u$
should be extended or added to the solution as is.

This process is essentially the same as that for $u$, involving choosing the
optimal value of the max-marginal at $j$, but we are now constrained to
configurations with parent $u$. This means that we are choosing whether to end
the path at $j$ or continue to one of $j$'s neighbors in $V$, so we compute
\begin{align}
	-\log \bar{p}_j(2,u,\bullet) &= \sum_{k \in \partial j} H_{k \to j} + B_{u \to
j} - H_{u \to j} \label{eq:max-marg-top-end} \\
	-\log \bar{p}_j(2,u,i) &= \sum_{k \in \partial j} H_{k \to j} + B_{u \to j} -
H_{u \to j} + A^3_{i \to j} - H_{i \to j}, \quad i \in \partial^Vj
\cap \mathcal{A}\label{eq:max-marg-top-cont}
\end{align}
and choose the minimum. If the minimum is \eqref{eq:max-marg-top-end} the path
is terminated so we add $P_u$ to our solution as is and proceed to the next
root node. Otherwise the minimum is achieved by a particular choice $i$, so we
set $P_u = \{u,j,i\}$, remove $i$ from the set of available nodes by setting
$\mathcal{A} = \mathcal{A} \setminus \{i\}$, and proceed to determine whether
$P_u$ should be extended or added to the solution as is. Note again that $i
\in \mathcal{A}$ guarantees we have not previously selected $i$ to participate
in any other path. To confirm that $(j,i) \in \EE$, note that the first line of
Table \ref{table:V-V} implies $A^3_{i \to j} = \infty$ if $(j,i) \not\in \EE$,
in which case \eqref{eq:max-marg-top-cont} is infinite for that choice of $i$.
Such an $i$ cannot be the chosen minimum value because
\eqref{eq:max-marg-top-end} is always finite as can be seen from the last two
lines of Table \ref{table:V-V} and the first line of Table \ref{table:b-U-V}.
Thus if we choose $i$ to continue the path we must have $(j,i) \in \EE$ which
guarantees this is a valid choice.

After the first two nodes of a path the process becomes more generic, so let
us suppose we have some $P_u$ such that $|P_u| = d \ge 3$ and the last two
nodes of the path $P_u$ are are $p$ and $j$. We still need to determine whether
to extend the path $P_u$ or add it to the solution as is. If $d = K$ then we
know that we cannot extend the path, so we add it to the solution and move on
to the next root node. Otherwise, we compute
\begin{align}
	-\log \bar{p}_j(d,p,\bullet) &= \sum_{k \in \partial j} H_{k \to j} +
B^{d-1}_{p \to j} - H_{p \to j} \label{eq:max-marg-mid-end} \\
	-\log \bar{p}_j(d,p,i) &= \sum_{k \in \partial j} H_{k \to j} + B^{d-1}_{p \to
j} - H_{p \to j} + A^{d+1}_{i \to j} - H_{i \to j}, \quad i \in \partial^Vj
\cap \mathcal{A} \label{eq:max-marg-mid-cont}
\end{align}
and choose the minimum. If the minimum is \eqref{eq:max-marg-mid-end} we leave
$P_u$ as is, add it to the solution and move on to the next root node. If the 
minimum is \eqref{eq:max-marg-mid-cont} for some $i$, then we append $i$ to the
end of $P_u$ and remove $i$ from $\mathcal{A}$. We now have a path of length $d
+1$ and can repeat the above procedure until we terminate the path. As above,
we restrict our attention to nodes in $\mathcal{A}$ to insure no node is
assigned to multiple paths. Furthermore, for $i \in \partial^Vj \cap
\mathcal{A}$ if $(j,i) \not\in \EE$ then $A^{d+1}_{i \to j} = \infty$ but
\eqref{eq:max-marg-mid-end} is always finite because $H_{k \to j}$ is finite
for any pair of nodes $k$ and $j$ and $B_{p \to j}^{d-1}$ is finite because we
are guaranteed to have $(p,j) \in \EE$ by the fact that these are the last two
nodes of the partially constructed path $P_u$. Thus if we choose to continue
the path we do so along an edge with the correct direction to a node that is
not participating in another already constructed path, and so the end result of
this process is a feasible collection of node-disjoint directed paths.

Notice, however, that the result depends on the order in which we process
roots, so in practice we often repeat the procedure several times with
different orders and choose the best solution among those we find.

\section{Alternative algorithms.}\label{sec:algorithms}
Our numerical results in section \ref{sec:performance} below will compare our
efficient BP algorithm to three alternatives: a Greedy algorithm, an IP based
algorithm for KEP from \cite{Anderson20012015}, and a novel IP based algorithm.
We now describe these algorithms.

We propose the Greedy algorithm as a simple to implement approximation
algorithm with good scaling behavior. It is greedy in the sense of adding paths
to the solution one at a time and adding the longest possible path at each
step, but when adding a path the search for the longest path is exhaustive.
Given an arbitrary ordering of root notes, the Greedy algorithm proceeds as
follows. Let $\mathcal{A} = V$ be the set of non-root nodes that have not yet
been added to a path. From a root node $u \in U$, all possible directed
paths with length at most $K$ consisting of $u$ and a collection of nodes in
$\mathcal{A}$ are explored and the longest such path is chosen, with ties being
broken arbitrarily. The longest possible path is found by recursively looking
for the longest path from each neighbor, keeping track of the depth bound to
insure a path longer than $K$ is not chosen. The chosen path is then added to
the solution and its nodes are removed from $\mathcal{A}$. If the path consists
only of the node $u$ it is discarded. The algorithm now proceeds to the next
root node in the given order.
 
This algorithm is guaranteed to generate a feasible collection of
directed paths, but the size of that collection is highly dependent on the
order in which we consider root nodes, so we repeat the process for a large
number of different orderings of the root nodes, choosing the best solution.
For the numerical results to follow we often used about 200 randomly generated
orders. In most cases, the number of orderings is chosen so the running time of
the Greedy algorithm is roughly equal to the message passing algorithm (which
also includes a similar randomization step, as noted in section
\ref{sec:reconstruction}).

The KEP algorithm from \cite{Anderson20012015} is based on a Traveling Salesman
IP formulation approach to the kidney exchange problem and can find bounded
length cycles as well as paths, which may be bounded or unbounded in length.
For our comparisons we enforce our path length bound $K$ and disallow all
cycles.

Finally we have implemented a simple IP formulation of the problem based on the
same description we used in the development of the message passing algorithm.
We call this the Parent-Child-Depth (PCD) formulation. We let
\[\partial^+ i \defeq \{j \in \partial i : (i,j) \in \EE\}\]
and
\[\partial^- i \defeq \{j \in \partial i : (j,i) \in \EE\}.\]
The formulation is as follows:
\begin{align}
\text{maximize}\quad &\sum_{i \in [n]}x_i \label{eq:ip-obj}\\
\text{s.t.}\quad x_i &= \sum_{j \in \partial^- i}p_{ji} + p_{\bullet i}
&\forall i \in [n] \label{eq:ip-x}\\
\sum_{j \in \partial^+ i}c_{ij} &\le x_i &\forall i \in [n]
\label{eq:ip-only-one} \\
p_{\bullet i} &\le \sum_{j \in \partial^+ i}c_{ij} &\forall i \in [n]
\label{eq:ip-root-child} \\
d_i &= d_{\bullet i} + \sum_{j \in \partial^- i}d_{ji} &\forall i \in [n]
\label{eq:ip-d} \\
0 \le d_i &\le K &\forall i \in [n]\label{eq:ip-d-bounds} \\
d_{\bullet i} &= p_{\bullet i} &\forall i \in [n]\label{eq:ip-ddot} \\
p_{\bullet i} &= 0 &\forall i \in V \label{eq:ip-root} \\
p_{ji} &= c_{ji} &\forall (j,i) \in \EE \label{eq:ip-pc-match}\\ 
d_{ji} &= p_{ji}(d_j + 1) &\forall (j,i) \in \EE\label{eq:ip-depth-cons} \\
p_{ji} + c_{ij} &\le 1 &\forall i \in [n], j \in \partial^-i\cap \partial^+ i
 \label{eq:ip-d-no-2-cycle} \\
x_i,p_{\bullet i} &\in \{0,1\}, \quad d_i, d_{\bullet i} \in \Z & \forall i
\in [n] \label{eq:ip-var-lone}\\
p_{ji} &\in \{0,1\}, \quad d_{ji} \in \Z &\forall i \in [n], j \in \partial^-
i \label{eq:ip-var-par} \\
c_{ij} &\in \{0,1\} &\forall i \in [n], j \in
\partial^+ i  \label{eq:ip-var-child}
\end{align}

\begin{prop}\label{prop:PCD-valid}
There is a bijection between feasible solutions to the IP problem
\eqref{eq:ip-obj}-\eqref{eq:ip-var-child} and the set $M$ defined in
\eqref{eq:feasible-set-def} and an optimal solution to the IP problem maximizes
the objective function $H(d,p,c)$ defined in \eqref{eq:graph-cost-def}.
\end{prop}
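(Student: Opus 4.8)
The plan is to exhibit explicit maps in both directions between IP-feasible assignments of $(x_i, p_{ji}, p_{\bullet i}, c_{ij}, d_i, d_{ji}, d_{\bullet i})$ and triples $(d,p,c)\in M$, check feasibility on each side, verify the maps are mutually inverse, and finally observe that the objectives match under the correspondence. Concretely, define a map $\Phi$ sending an IP-feasible point to a triple $(d',p',c')$ by: $p'_i=\bullet$ if $p_{\bullet i}=1$, $p'_i=j$ if $p_{ji}=1$ for some $j\in\partial^- i$, and $p'_i=*$ otherwise; $c'_i=j$ if $c_{ij}=1$ for some $j\in\partial^+ i$, $c'_i=\bullet$ if $x_i=1$ with no such $j$, and $c'_i=*$ if $x_i=0$; and $d'_i=*$ if $x_i=0$, equal to the IP value of $d_i$ otherwise. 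Well-definedness (a single parent, a single child) holds because $x_i\in\{0,1\}$ and the summands in \eqref{eq:ip-x} are nonnegative integers, so at most one of $p_{\bullet i},\{p_{ji}\}_j$ is nonzero, while \eqref{eq:ip-only-one} forces at most one $c_{ij}$ to be nonzero; and the values land in the ranges required by $M$ using $\partial^- i\subseteq\partial i$, $\partial^+ i\subseteq\partial^V i$, and \eqref{eq:ip-root}.

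Next I would check $\Phi$ lands in $M$, i.e.\ $f_i=1$ for all $i$ and $g_{ij}=1$ for all $(i,j)\in\EE'$. For $f_i$ this is a three-way split: if $x_i=0$ then \eqref{eq:ip-x} zeroes all parent/child indicators and \eqref{eq:ip-d}--\eqref{eq:ip-ddot} force $d_i=0$, matching \eqref{eq:node-valid-def-np}; if $p_{\bullet i}=1$ then \eqref{eq:ip-root} gives $i\in U$, \eqref{eq:ip-ddot}--\eqref{eq:ip-d} give $d_i=1$, and \eqref{eq:ip-root-child} supplies a child, matching \eqref{eq:node-valid-def-root}; and if $i\in V$ participates then it has a unique parent $j$, and since $p_{ji}=1$ forces $c_{ji}=1$ by \eqref{eq:ip-pc-match}, whence $x_j\ge1$ by \eqref{eq:ip-only-one} and hence $d_j\ge1$ by unwinding \eqref{eq:ip-d}, so $d_i=d_j+1\ge2$ by \eqref{eq:ip-depth-cons}, matching \eqref{eq:node-valid-def-non-root}. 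For $g_{ij}$ I would split on whether $p_{ji}=1$, $p_{ij}=1$, or neither; in the first two cases the required relations ($c_j=i$ resp.\ $c_i=j$, $d_i=d_j+1$ in the right orientation, the correct directional edge, and $p_j\ne i$, $c_i\ne j$) follow from \eqref{eq:ip-pc-match}, \eqref{eq:ip-depth-cons}, \eqref{eq:ip-d}, and the $2$-cycle exclusion \eqref{eq:ip-d-no-2-cycle}; the remaining case matches line \eqref{eq:edge-valid-def-neither}.

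In the other direction, given $(d,p,c)\in M$ define $\Psi$ by $x_i=\1\{d_i\ne*\}$, $p_{\bullet i}=\1\{p_i=\bullet\}$, $p_{ji}=\1\{p_i=j\}$, $c_{ij}=\1\{c_i=j\}$, $d_i$ equal to its $M$-value when participating and $0$ otherwise, $d_{\bullet i}=p_{\bullet i}$, and $d_{ji}=p_{ji}(d_j+1)$, and check each IP constraint using $f_i=1$ and $g_{ij}=1$. Constraints \eqref{eq:ip-ddot}, \eqref{eq:ip-depth-cons}, and the integrality constraints hold by construction; \eqref{eq:ip-x}, \eqref{eq:ip-only-one}, \eqref{eq:ip-root-child}, \eqref{eq:ip-root} follow from the form of $f_i$, with \eqref{eq:ip-root-child} also using $g_{ij}$ to certify that the child of a root lies on an out-edge; \eqref{eq:ip-d}, \eqref{eq:ip-d-bounds} follow from the relation $d_i=d_j+1$ encoded in $g_{ij}$ together with the range $d_i\in[K]$ built into $M$; and \eqref{eq:ip-pc-match}, \eqref{eq:ip-d-no-2-cycle} follow from the parent/child agreement and $2$-cycle exclusion in $g_{ij}$. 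That $\Phi\circ\Psi$ and $\Psi\circ\Phi$ are the identity is then a direct comparison of definitions. Finally, since $x_i=1\iff d_i\ne*\iff\eta(d_i,p_i,c_i)=1$, corresponding points satisfy $\sum_{i\in[n]}x_i=H(d,p,c)$, so a maximizer of the IP objective corresponds to a maximizer of $H$ over $M$, which by Proposition~1 is a maximum-size feasible collection of node-disjoint paths.

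I expect the main obstacle to be the bookkeeping around the auxiliary depth variables $d_i$, $d_{ji}$, $d_{\bullet i}$ and the nonlinear relation \eqref{eq:ip-depth-cons}. The one genuinely non-mechanical point is showing, in the forward direction, that a participating $i\in V$ must have $d_i\ge2$ (so it falls into case \eqref{eq:node-valid-def-non-root} rather than \eqref{eq:node-valid-def-root}); this needs the chain of implications that $i$'s parent also participates — via \eqref{eq:ip-pc-match} and \eqref{eq:ip-only-one} — and that a participating node has depth at least $1$ — via \eqref{eq:ip-x}, \eqref{eq:ip-ddot}, \eqref{eq:ip-depth-cons}, \eqref{eq:ip-d}. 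Everything else is a routine unwinding of the constraint definitions against the definitions of $f_i$ and $g_{ij}$.
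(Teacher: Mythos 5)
Your proposal is correct and takes essentially the same route as the paper's proof: explicit node-by-node constructions in both directions keyed to the three cases of $f_i$, with the same constraints in the same roles (\eqref{eq:ip-pc-match} for parent/child agreement, \eqref{eq:ip-d-no-2-cycle} to exclude $c_i=p_i$, \eqref{eq:ip-d} and \eqref{eq:ip-depth-cons} for depth consistency, \eqref{eq:ip-root} for root identification), finishing with $x_i=\eta(d_i,p_i,c_i)$ to match objectives. The only detail worth adding, which the paper states explicitly, is that in your third case a participating node with $p_{\bullet i}=0$ is automatically in $V$, since $\partial^- i=\emptyset$ for $i\in U$ forces $x_i=p_{\bullet i}$ there, so the configuration indeed falls under \eqref{eq:node-valid-def-non-root}.
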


\begin{remark}
This proposition states that the IP problem
\eqref{eq:ip-obj}-\eqref{eq:ip-var-child} is in fact the same problem we deal
with elsewhere in this paper. Before providing a formal proof, we provide some
intuition for the formulation. The variable $x_i$ acts as an indicator for
whether the node $i$ participates in a path, with the various $p_{ji}$ and
$p_{\bullet i}$ indicators representing each possible choice of parent for node
$i$ (including the possibility of starting a path). Similarly the $c_{ij}$
variables represent the possible choices of child for the node $i$. The integer
variable $d_i$ represents the depth of $i$ in the path. This depth variable is 
decomposed into variables $d_{ji}$ and $d_{\bullet i}$ which will be equal to
$d_i$ if the parent of $i$ is $j$ or $i$ starts a path, respectively, and zero
otherwise. Thus at most one of $d_{\bullet i}$ and the $d_{ji}$ will be
nonzero, representing the actual choice of parent implied by the values of
$p_{\bullet i}$ and the $p_{ji}$. The various constraints serve to guarantee
that this interpretation of these variables is accurate and all feasibility
constraints for the feasible set $M$.

More specifically, constraint \eqref{eq:ip-x} insures that a node
participates if and only if it has a parent or starts a path. The remaining 
constraints insure that each node has at most one parent and one child, and
that all variables agree locally with each other. Note
\eqref{eq:ip-root-child}, which requires a root to have a child (so each path
contains at least one edge). The depth bound is enforced by requiring $d_i \le
K$ and insuring that $d_i$ accurately represents the depth of $i$. Note in
particular \eqref{eq:ip-depth-cons}, which insures that depth increases from
parent to child. This constraint is quadratic.
\end{remark}

\begin{proof}[Proof of Proposition \ref{prop:PCD-valid}]
Given $(d',p',c') \in M$, we can construct IP variables $(x,d,p,c)$
node-by-node. For each $i \in [n]$ there are three possible cases, represented
by \eqref{eq:node-valid-def-np}-\eqref{eq:node-valid-def-non-root} because
$f_i(d_i',p_i',c_i') = 1$ is guaranteed by $(d',p',c') \in M$:
\begin{itemize}	
\item If $d_i' = p_i' = c_i' = *$, then the node $i$ does not participate in a
path, so we let $x_i = p_{\bullet i} = d_{i\bullet} = 0$, for $j
\in \partial^- i$ we let $p_{ji} = d_{ji} = 0$, and for $j \in \partial^+ i$ we
let $c_{ij} = 0$.
\item If $p_i' = \bullet$, then the node $i$ starts a path, so we let
$x_i = p_{\bullet i} = 1$, $d_i = d_{\bullet i} = 1$, and for each $j \in
\partial^- i$ we let $p_{ij} = d_{ij} = 0$. By \eqref{eq:node-valid-def-root}
we know $i$ has a child $c_i' \in \partial^+ i$, so we can let $c_{ic_i'} = 1$
and for $j \in \partial^+ i \setminus c_i'$ we let $c_{ij} = 0$. Note that we
must also have $d_i' = 1$, so we have $d_i = d_i'$.
\item If $p_i' \not\in \{*,\bullet\}$ then the node $i$ has a parent $p_i'
\in \partial^- i$, so we let $x_i = p_{p_i'i} = 1$ and let $p_{\bullet i} = 0$
and for $j \in \partial^- i \setminus p_i'$ let $p_{ji} = 0$. By
\eqref{eq:node-valid-def-non-root} we have $2 \le d_i' \le K$, so we let $d_i
= d_{p_i'i} = d_i'$ and let $d_{\bullet i} = 0$ and for $j \in \partial^- i
\setminus p_i'$ let $d_{ji} = 0$. If $c_i' = \bullet$ we let $c_{ij} = 0$ for
all $j \in \partial^+ i$ because $i$ has no child. Otherwise we let $c_{ic_i'}
= 1$ and for all $j \in \partial^+ i \setminus c_i'$ let $c_{ij} = 0$.
\end{itemize}
With variables constructed in this way it is straightforward to check that all
of the constraints are satisfied, so we omit the details.

Now suppose we have a feasible set of IP variables $(x,d,p,c)$. We will 
construct $(d',p',c')$ as described in Section \ref{sec:setup} and show that it
is an element of the set $M$. We proceed node-by-node. For $i \in [n]$:
\begin{itemize}
\item If $x_i = 0$, let $d_i' = p_i' = c_i' = *$. 
\item If $x_i = 1$ and $p_{\bullet i} = 1$, let $d_i' = 1$, $p_i' = \bullet$. 
By \eqref{eq:ip-root-child} there is some $j \in \partial^+ i$ such that
$c_{ij} = 1$. Let $c_i' = j$.
\item If $x_i = 1$ and $p_{\bullet i} = 0$, by \eqref{eq:ip-x} there exists
some $j \in \partial^- i$ such that $p_{ji} = 1$. Let $p_i' = j$, and let $d_i'
= d_i$. If $c_{ik} = 0$ for all $k \in \partial^+ i$, let $c_i' = \bullet$.
Otherwise \eqref{eq:ip-only-one} implies there is exactly one $k
\in \partial^+ i$ such that $c_{ik} = 1$, so we let $c_i' = k$.
\end{itemize}
We will now show $(d',p',c') \in M$. 

For $i \in [n]$ the three cases above correspond exactly to the cases
\eqref{eq:node-valid-def-np}-\eqref{eq:node-valid-def-non-root} to guarantee
$f_i(d_i',p_i',c_i') = 1$. Note in particular that \eqref{eq:ip-root}
guarantees $i \in U$ for any $i$ with $p_i'=\bullet$. If $p_i'
\not\in \{*,\bullet\}$ then we have $i \in V$ because $p_i' = j$ if and only if
$p_{ji} = 1$ for some $j \in \partial^- i$ but by assumption $\partial^- i =
\emptyset$ for $i \in U$.

For $i \in [n], j \in \partial i$, we have three cases. If $p_i'=j$ then we
know $p_{ji} = 1$ and $j \in \partial^- i$, so \eqref{eq:ip-pc-match} implies
$c_{ji} = 1$ and thus $c_j' = 1$. \eqref{eq:ip-d-no-2-cycle} guarantees $p_j'
\ne i$ and $c_i' \ne j$. $j \in \partial^- i$ guarantees $(j,i) \in \EE$ and
\eqref{eq:ip-depth-cons} guarantees $d_i' = d_j' + 1$. The case $p_j' = i$ is
similar. If $p_i' \ne j$ and $p_j' \ne i$ \eqref{eq:ip-pc-match} guarantees
$c_i' \ne j$ and $c_j' \ne i$. Thus we are always in one of the cases
\eqref{eq:edge-valid-def-j-parent}-\eqref{eq:edge-valid-def-neither}, so
$g_{ij}(d_i',p_i',c_i',d_j',p_j',c_j') = 1$.

Thus we conclude $(d',p',c') \in M$, so we have established a bijection between
the feasible set of the IP and the set $M$.

To see that the optimization is the same, we need only note that $x_i = 1$ if
and only if $p_i' \ne *$. Thus by \eqref{eq:node-cost-def} we have $x_i =
\eta_i(p_i',c_i',d_i')$ for all $i \in [n]$ and so by \eqref{eq:graph-cost-def}
we have $\sum_{i \in [n]}x_i = \sum_{i \in [n]} \eta(d_i',p_i',c_i') =
H(d,p,c)$. In other words, both are simply measuring the number of nodes that
participate in a path.
\end{proof}

Note that this formulation is polynomial in $n$, both in the number of
variables and in the number of constraints, and is therefore simple to
implement. As we will see below, for a number of instances, with $n = 1000$ and
$K = 15$, PCD produces optimal solutions in more than 50\% of tested graphs
(see Table \ref{fig:long-random}), while KEP does not find optimal solutions in
the same allowed time.

\section{Numerical results.}\label{sec:performance}
We compare the performance of our algorithm, which we call BP, to the three
alternative algorithms described in section \ref{sec:algorithms}. Note that the
two IP-based algorithms (KEP and PCD) allow us to solve the
problem to optimality given sufficient time, whereas the Greedy and BP 
algorithms have no optimality guarantees.

When running the BP algorithm we use the value $\beta = 0.01$ for the
parameter defined in \eqref{eq:node-potential}. 
Informal testing of different values of $\beta$ indicated that this value had,
in some cases, a small advantage of about 1\% in the size of solutions over
larger $\beta$ values while smaller $\beta$ values resulted in significantly
worse solutions. When this advantage was not present the value of $\beta$ had
no noticeable effect on the solution size.

For all cases except the three largest real networks we use 200 orders of root
nodes for the Greedy algorithm and 5 orders of root nodes for each iteration of
BP. For the largest graphs these are decreased to 50 and 2, respectively. In
all instances BP is run for a maximum of $T=50$ iterations, terminating earlier
only if it has converged to a fixed point or exceeded a preset time limit. In
most cases BP did not consistently converge, though it did converge somewhat
more often for smaller graphs with $n = 1000$ and for longer allowed paths
with $K = 15$. We have not seen a significant correlation between instances
where BP converges and those where it finds significantly better solutions, but
convergence does improve running time simply by allowing the algorithm to
terminate earlier.

\subsection{Random graph comparisons.}
We report our results on random graphs constructed as follows. We let each
ordered pair of nodes in $V \times V$ and $U \times V$ have an edge
independently with probability $p$. We set $p=c/n$ and vary $c$. Thus for every
pair $(i,j) \in U \cup V \times V$ the directed edge $(i,j)$ is present with
probability $c/n$ and absent with probability $1 - c/n$. 

\begin{table}
	\centering
	\begin{tabular}{|c|c|c c c|c c c|}
		\multicolumn{2}{c}{} & \multicolumn{3}{c}{Nodes in solution} &
\multicolumn{3}{c}{Running Time (s)} \\
		\hline
		Root\% & $c$ & KEP & Greedy & BP & KEP & Greedy & BP \\
		\hline
10 & 2 & \cellcolor{optimal} 384.9 & 364.4 & \cellcolor{better} 376.6 & 3.0 &
0.6 & 2.2 \\
10 & 3 & \cellcolor{optimal} 460.8 & 446.4 & \cellcolor{better} 457.8 & 16.2 &
0.9 & 4.5 \\
10 & 4 & 426.4 & 480.4 & 469.2 & 63.5 & 0.9 & 4.0 \\
20 & 2 & \cellcolor{optimal} 641.6 & 554.6 & \cellcolor{better} 603.4 & 3.3
& 0.9 & 2.6 \\
20 & 3 & 813.6 & 685.9 & \cellcolor{better} 746.3 & 63.5 & 1.3 & 3.8 \\
20 & 4 & 721.8 & 763.1 & \cellcolor{best} 769.0 & 63.3 & 1.8 & 2.1 \\
25 & 2 & \cellcolor{optimal} 700.7 & 600.3 & \cellcolor{better} 654.5 & 2.9 &
1.0 & 2.4 \\
25 & 3 & 893.7 & 726.6 & \cellcolor{better} 787.0 & 56.4 & 1.7 & 3.9 \\
25 & 4 & 818.7 & 801.3 & \cellcolor{best} 834.9 & 63.9 & 2.3 & 2.5 \\
		\hline
	\end{tabular}
	\caption{Random graphs on $n = 1000$ nodes with maximum path length $K = 5$
and edge probability $c/n$. 100 samples were run for each row, with $\beta =
0.01$ for BP. \colorbox{optimal}{Dark shaded KEP results} are cases where KEP
was optimal in at least 99\% of samples. BP results are labeled by
\colorbox{better}{finding more nodes than Greedy} and \colorbox{best}{finding
the most nodes of any method}.}
	\label{fig:random} 
\end{table}

Table 1 shows results for some varieties of random graphs. The column
``Root\%'' indicates the percentage of nodes which are in the set $U$, namely
$|U|/n$. The column ``$c$'' indicates the value of the parameter $c$ for the
graphs tested. The remaining columns summarize the results of testing the
algorithms on 100 random graphs, first showing the average number of nodes in
the returned solution and then showing the average running time in seconds. For
the IP based algorithms a time limit of 60 seconds was implemented by the IP 
solver and if the optimal solution was not found within that time the best
feasible solution identified by the solver was used. While in many cases KEP
returned a nontrivial feasible solution, if PCD did not find the optimal
solution it almost always returned the trivial feasible solution zero. For the
cases represented in Table \ref{fig:random} PCD never returned a nontrivial
feasible solution so we omit PCD from the table. It does, however, appear in
other cases below and performs well in some ranges outside those that appear in
tables, as will be discussed later.

Testing over a wide range of parameter values we find that in many instances
one or both of the IP algorithms runs to termination in time comparable to BP
or Greedy. In these cases Greedy and BP both generally return suboptimal
solutions, with both algorithms having ranges of parameters where they enjoy a
small advantage over the other. This advantage ranges from Greedy finding about
5\% more nodes than BP to BP finding around 10\% more nodes than Greedy. Table
\ref{fig:random} shows results for random graphs with $1000$ nodes and a
variety of percentages of root nodes. The algorithms were run with path length
bound $K = 5$ and edge probability $c/n$ for $c = 2,3,4$. For $c = 2$ KEP is
optimal in every sample and has a running time comparable to BP. BP finds
feasible solutions with 93\%-98\% of the nodes of the optimal solution, whereas
the Greedy solutions have 85\%-95\%. As we increase $c$, KEP finds fewer
optimal solutions, though it is still optimal in all but 1 of our 100 samples
when 10\% of the nodes of the graph are roots for $c = 3$. In most cases BP
finds more nodes than Greedy does while falling short of even the suboptimal
KEP solutions, but for $c = 4$ with 20\% and 25\% root nodes, BP performs the
best among the compared algorithms, finding the most nodes in time comparable
to Greedy and significantly shorter than KEP.

While we do not show full results here, we can provide some sense of the
behavior of the algorithms for graphs just outside the regimes in Table
\ref{fig:random}. If the root percentage is decreased below 10\% KEP is
generally optimal and both BP and Greedy often find this optimal solution,
likely because the scarcity of roots leads to less interaction between paths in
feasible solutions. If the root percentage is increased above 25\% BP still
finds more nodes than Greedy but significantly fewer than KEP, which generally 
finds optimal solutions. Though PCD did not find nontrivial solutions for the
cases in Table \ref{fig:random}, it is often optimal for these cases with
many roots. If the root percentage is kept in the 10\%-25\% range but $c$ is
decreased to $1$, both KEP and PCD are always optimal, with Greedy missing a
small percentage of nodes and BP slightly behind Greedy. If $c$ is increased
above $4$, both Greedy and BP find significantly (10\%-20\%) more nodes than
KEP, which is never optimal, and Greedy usually has a small (1\%-3\%) advantage
over BP. In this regime PCD finds no nontrivial solutions.

\begin{table}
	\centering
	\begin{tabular}{|c| c c c |c c c|}
		\multicolumn{1}{c}{} & \multicolumn{3}{c}{Nodes in solution} &
\multicolumn{3}{c}{Running Time (s)} \\
		\hline
		$c$ & KEP  & Greedy & BP & KEP  & Greedy & BP \\
		\hline
2 & \cellcolor{optimal} 6426.0 & 5371.5 & \cellcolor{better} 5973.7	 & 150.3
& 39.3 & 42.6 \\
3 & 6792.8 & 6640.2 & \cellcolor{best} 7332.5 & 1229.4 & 60.0 & 65.8 \\
4 & 7199.9 & 7413.1 & \cellcolor{best} 7492.0 & 1250.4 & 87.3 & 48.6 \\
		\hline
	\end{tabular}
	\caption{Random graphs on $n = 10000$ nodes with 20\% root nodes, maximum path
length $K = 5$, and edge probability $c/n$. 100 samples were run for each
row, with $\beta = 0.01$ for BP. \colorbox{optimal}{Dark shaded KEP results}
were optimal in all samples. BP results are labeled by
\colorbox{better}{finding more nodes than Greedy} and \colorbox{best}{finding
the most nodes of any method}.}
	\label{fig:big-random} 
\end{table}

Table \ref{fig:big-random} shows tests over some of the same parameters as
Table \ref{fig:random} on larger graphs, with 10,000 nodes. For these larger
graphs the time limit for KEP and PCD is increased to 20 minutes, and again PCD
is omitted because it never finds nontrivial solutions within the time limit.
Most general trends remain the same, with KEP always finding optimal
solutions within the time limit for $c = 2$ but BP performing better for $c =
3$ and $c = 4$. Specifically, BP now has an advantage over KEP for $c = 3$.
Another factor to note is that BP has a significant advantage over Greedy in
running time in the $c = 4$ case, which we can credit to the fact that BP
converged in all samples for that case, running an average of 19 iterations as
compared to 50 for the other cases.

\begin{table}
	\centering
	\begin{tabular}{|c|c| c c c c|c c c c|}
		\multicolumn{2}{c}{} & \multicolumn{4}{c}{Nodes in solution} &
\multicolumn{4}{c}{Running Time (s)} \\
		\hline
		$K$ & $c$ & KEP & PCD & Greedy & BP & KEP & PCD & Greedy & BP \\
		\hline
10 & 2 & 656.1 & 418.3 & 571.3 & \cellcolor{best} 671.7	& 63.0 & 59.9 & 1.4 &
3.7 \\
10 & 3 & 784.3 & 9.0 & 715.9 & \cellcolor{better} 758.2	& 67.8 & 60.2 & 1.8 &
3.3 \\
10 & 4 & 830.2 & 0 & 796.5 & \cellcolor{better} 812.6	& 69.1 & 60.2 & 2.3 &
2.6 \\
10 & 5 & 841.2 & 0 & 845.5 & \cellcolor{best} 852.6 & 72.6 & 60.2 & 2.8 & 2.7
\\
15 & 2 & 686.3 & \cellcolor{optimal} 709.9 & 584.9 & \cellcolor{better} 684.7
& 63.1 & 10.0 & 1.3 & 3.6 \\
15 & 3 & 868.0 & 779.7 & 728.6 & \cellcolor{better} 789.7 & 68.4 & 34.1 & 2.0 &
3.7 \\
15 & 4 & 921.0 & 760.2 & 806.6 & \cellcolor{better} 835.6 & 69.3 & 40.1 & 2.7 &
3.0 \\
15 & 5 & 938.0 & 693.8 & 855.0 & \cellcolor{better} 869.6 & 72.2 & 47.4 & 3.2 &
3.1 \\
		\hline
	\end{tabular}
	\caption{Random graphs on $n = 1000$ nodes with 15\% root nodes, maximum path
length $K$, and edge probability $c/n$. 100 samples were run for each row,
with $\beta = 0.01$ for BP. \colorbox{optimal}{Shaded PCD results} were optimal
in 96\% of samples, and in the other $K = 15$ cases PCD was optimal in at
least 50\% of samples. BP results are labeled by \colorbox{better}{finding more
nodes than Greedy} and \colorbox{best}{finding the most nodes of any method}.}
	\label{fig:long-random} 
\end{table}

In Table \ref{fig:long-random} we provide summary results for problems with
longer allowed paths; that is with $K = 10$ and $K = 15$. Here KEP is
almost never optimal, though it does often provide a feasible solution with
more nodes than BP or Greedy, albeit at the cost of running to the
cutoff time rather than terminating in a few seconds. In all cases for this
regime BP finds more solutions than Greedy while running in comparable time,
and in two of these cases it also on average finds more nodes than any other
tested algorithm. In some cases the solution size advantage over Greedy is
significant, with BP solutions more than 10\% larger than Greedy solutions.
Note that PCD performs rather well for $K = 15$, often
finding the optimal solution, though the average size of the solution is
reduced by the fact that PCD rarely returns a nontrivial feasible solution,
generally either returning the optimal solution or an empty solution. PCD found
the optimal solution in at least 50\% of samples for all the $K = 15$ cases,
while KEP only found any for $c = 2$, for which it was optimal in only 6\% of
cases. Thus there is an advantage to PCD over KEP for all $K = 15$ cases as
it much more regularly finds the optimal solution. When it does find the
optimal solution it is also relatively fast, with an average runtime of 23
seconds for cases across all 4 values of $c$ in which PCD found the optimal
solution. BP and Greedy both remain significantly faster, and BP does on
average find better solutions than either PCD or Greedy.

For similar graphs just outside the regime presented in Table
\ref{fig:long-random} we have a couple situations. As in the case of a shorter
path bound, decreasing $c$ to 1 makes KEP and PCD optimal in all cases, with
Greedy missing a small percentage of nodes and BP slightly behind Greedy.
Increasing $c$ above $5$ leads to Greedy and BP having essentially identical
performance, both generally better than KEP and PCD. PCD still finds optimal
solutions in a reasonable percentage of $K = 15$ cases, however. If the path
length bound is kept as long as 15 or increased up to 25, PCD performs very
well over a wide variety of $c$ values and root percentages from 15\% to 30\%,
generally finding the optimal solution. In many of these cases KEP also finds
an optimal solution but does not terminate and confirm that the solution is
optimal so PCD has a significant advantage in running time in those cases. BP
and Greedy are suboptimal in these high-$K$ cases, both finding 80\%-90\% of
the nodes in the optimal solution.

\subsection{Computational results for real world networks.}

We compared the performance of KEP, PCD, Greedy, and BP algorithms on several
real world networks taken from the Stanford Large Network Dataset collection
\cite{snapnets}. We chose directed graphs and randomly selected a subset of
nodes to act as roots (discarding directed edges pointing in to roots and
discarding isolated nodes). We present performance comparisons for five
networks, which are as follows:
\begin{itemize}
\item Epinions: this network is drawn from the website Epinions.com and
represents ``trust'' relationships between users, in which each user can
indicate whether they trust the opinions of each other user. Thus an edge from
$i$ to $j$ indicates that user $i$ trusts user $j$.	
\item Gnutella: a snapshot of the Gnutella peer-to-peer file sharing network,
with nodes representing hosts and edges representing the connections between
those hosts.
\item Slashdot: users of the technology news site Slashdot were allowed to tag
other users as friends or foes; our network has users as nodes and tags as
directed edges, drawn from a snapshot in 2008.
\item Wiki-vote: A record of votes on administration positions for Wikipedia,
in which administrators vote on the promotion of other potential
administrators. Nodes are wikipedia users and a directed edge from node $i$ to
node $j$ represents that user $i$ voted on user $j$.
\item Amazon: product co-purchasing on Amazon.com, based on the ``customers who
bought this item also bought feature''. A directed edge from product $i$ to
product $j$ indicates that $j$ appears on the page for $i$ as a frequently
co-purchased item. The data is a snapshot from March 2003.
\end{itemize}

\begin{table}
	\centering
	\begin{tabular}{|c c c| c c c |}
		\multicolumn{3}{c}{} & \multicolumn{3}{c}{Nodes in solution} \\
		\cline{1-6}
Graph & Nodes & $\Delta$ & KEP & Greedy & BP \\
		\cline{1-6}
Epinions & 68507 & 6.0 & - & 21,219.4 & \cellcolor{best} 24,347.4 \\
Gnutella & 5518 & 3.0 & \cellcolor{best} 1,889.6 & 1,610.8 &
\cellcolor{better} 1,708.2 \\
Slashdot & 74893 & 9.7 & - & \cellcolor{best} 36,770.8 & 31,266.8 \\
Wiki-vote & 6439 & 12.9 & - & 2,045.0 & \cellcolor{best} 2,173.0 \\
Amazon & 260982 & 3.8 &  - & 174,035.8 & \cellcolor{best} 180,902.6 \\
		\cline{1-6}
		\multicolumn{6}{c}{} \\
		\multicolumn{3}{c}{} & \multicolumn{3}{c}{Running Time (min)} \\
		\cline{1-6}
Graph & Nodes & $\Delta$ & KEP  & Greedy & BP \\
		\cline{1-6}
Epinions & 68507 & 6.0 & 56.5 & 9.8 & 17.9  \\
Gnutella & 5518 & 3.0 &  150.3 & 0.2 & 0.4  \\
Slashdot & 74893 & 9.7 &  58.9 & 5.1 & 20.5  \\
Wiki-vote & 6439 & 12.9 & 75.5 & 0.8 & 1.5  \\
Amazon & 260982 & 3.8 & 1,042.8 & 182.0 & 259.9  \\
		\cline{1-6}
	\end{tabular}
	\caption{Real world networks with 20\% root nodes (randomly selected) and
maximum path length $K = 5$. We use $\beta = 0.01$ for BP. $\Delta$ is the
average degree of the sampled graph. 5 samples were run for each graph, with
$-$ representing graphs for which KEP crashed without returning any solution.
\colorbox{best}{Shaded cells} represent the best solutions, with
\colorbox{better}{lighter shading} for the case when BP provides a better
solution than Greedy.}
	\label{fig:real} 
\end{table}

The results of our tests on the full graphs are summarized in Table
\ref{fig:real}. The ``Nodes'' column indicates the size of the given graph,
which is given as an average over the five tests because the selection of root
nodes results in a different number of nodes becoming isolated and being
dropped in each test. The column ``$\Delta$'' is the average degree of the
graph, which is also averaged across the choices of root nodes for each trial.
Remaining columns show the number of nodes in the average solution and the
running time in minutes for each algorithm. With the exception of the
graph Gnutella, KEP in all cases crashed rather than running to the stopping
time and so provided no solution, which we indicate by ``-''.  We believe these
crashes resulted from KEP exceeding the available memory. They also caused the
running times for KEP to be erratic, with the algorithm crashing before the
time limit in most cases. The extreme running time for Amazon, the largest
graph, can be explained by the external software we used failing to adhere to
the requested time limits before crashing. In all cases PCD failed to return a
nontrivial solution, so it is omitted from the table. For all graphs except
Slashdot, BP enjoys an advantage in solution size over Greedy, finding 4\%-15\%
more nodes while running only slightly longer on average. On Slashdot Greedy
has an advantage in both solution size and running time. We also note that even
on the graph Gnutella, the one case where KEP returned feasible solutions, BP
has a huge advantage in running time for a solution with 90\% of the nodes of
that returned by KEP.

\bibliographystyle{plain}
\bibliography{mp}

\begin{thebibliography}{1}

\bibitem{altarelli}
F.~{Altarelli}, A.~{Braunstein}, L.~{Dall'Asta}, C.~{De Bacco}, and S.~{Franz}.
\newblock {The edge-disjoint path problem on random graphs by message-passing}.
\newblock {\em ArXiv e-prints}, March 2015.

\bibitem{Anderson20012015}
Ross Anderson, Itai Ashlagi, David Gamarnik, and Alvin~E. Roth.
\newblock Finding long chains in kidney exchange using the traveling salesman
  problem.
\newblock {\em Proceedings of the National Academy of Sciences},
  112(3):663--668, 2015.

\bibitem{Bailly}
M.~Bailly-Bechet, C.~Borgs, A.~Braunstein, J.~Chayes, A.~Dagkessamanskaia,
  J.-M. François, and R.~Zecchina.
\newblock Finding undetected protein associations in cell signaling by belief
  propagation.
\newblock {\em Proceedings of the National Academy of Sciences},
  108(2):882--887, 2011.

\bibitem{Bayati-algorithm}
M.~Bayati, C.~Borgs, A.~Braunstein, J.~Chayes, A.~Ramezanpour, and R.~Zecchina.
\newblock Statistical mechanics of steiner trees.
\newblock {\em Phys. Rev. Lett.}, 101:037208, Jul 2008.

\bibitem{eschenfeldt}
P.~{Eschenfeldt}, D.~{Schmidt}, S.~{Draper}, and J.~{Yedidia}.
\newblock {Proactive Message Passing on Memory Factor Networks}.
\newblock {\em ArXiv e-prints}, January 2016.
\newblock Submitted to Journal of Machine Learning Research.

\bibitem{Koller-Friedman}
D~Koller and N~Friedman.
\newblock {\em Probabilistic Graphical Models: Principles and Techniques}.
\newblock MIT Press, 2009.

\bibitem{snapnets}
Jure Leskovec and Andrej Krevl.
\newblock {SNAP Datasets}: {Stanford} large network dataset collection.
\newblock \url{http://snap.stanford.edu/data}, June 2014.

\bibitem{Yedidia:2003:UBP:779343.779352}
Jonathan~S. Yedidia, William~T. Freeman, and Yair Weiss.
\newblock Exploring artificial intelligence in the new millennium.
\newblock chapter Understanding Belief Propagation and Its Generalizations,
  pages 239--269. Morgan Kaufmann Publishers Inc., San Francisco, CA, USA,
  2003.

\end{thebibliography}

\end{document}